\documentclass[11pt]{article}
\usepackage{amsmath,amsfonts,amsthm,amssymb,color}
\usepackage{framed,multicol,enumitem}
\usepackage{fancybox}
\usepackage{fullpage}

\newtheorem{theorem}{Theorem}[section]

\newtheorem{lemma}[theorem]{Lemma}

\newcommand{\suppress}[1]{}
\theoremstyle{definition}
\newtheorem{definition}[theorem]{Definition}

\newenvironment{fminipage}%
  {\begin{Sbox}\begin{minipage}}%
  {\end{minipage}\end{Sbox}\fbox{\TheSbox}}

\newenvironment{algbox}[0]{\vskip 0.2in
\noindent 
\begin{fminipage}{6.3in}
}{
\end{fminipage}
\vskip 0.2in
}

\def\defeq{\stackrel{\mathrm{def}}{=}}

\newcommand\cc{\boldsymbol{\mathit{c}}}

\newcommand\pp{\boldsymbol{\mathit{p}}}
\newcommand\qq{\boldsymbol{\mathit{q}}}

\newcommand\ww{\boldsymbol{\mathit{w}}}
\newcommand\yy{\boldsymbol{\mathit{y}}}
\newcommand\zz{\boldsymbol{\mathit{z}}}
\newcommand\xx{\boldsymbol{\mathit{x}}}
\newcommand\xxbar{\overline{\boldsymbol{\mathit{x}}}}

\newcommand\poly{\text{poly}}

\begin{document}
	
	\title{Concave Flow on Small Depth Directed Networks}
	
	\author{
	Tung Mai  \\
	Georgia Tech\\
	\texttt{tung.mai@cc.gatech.edu}
	\and
 	Richard Peng \\
	Georgia Tech\\
	\texttt{rpeng@cc.gatech.edu}
	\and 	
	Anup B. Rao \\
	Georgia Tech\\
	\texttt{arao89@gatech.edu}
	\and 
	Vijay V. Vazirani \\
	Georgia Tech\\
	\texttt{vazirani@cc.gatech.edu}
	}

	\maketitle

\begin{abstract}	
Small depth networks arise in a variety of network related applications,
often in the form of maximum flow and maximum weighted matching.
Recent works have generalized such methods to include costs arising from concave functions.
In this paper we give an algorithm that takes a depth $D$ network and strictly increasing concave
weight functions of flows on the edges and computes a  $(1 - \epsilon)$-approximation
to the maximum weight flow in time $mD \epsilon^{-1}$ times an overhead that
is logarithmic in the various numerical parameters related to the magnitudes
of gradients and capacities.

Our approach is based on extending the scaling algorithm for approximate
maximum weighted matchings by [Duan-Pettie JACM`14] to the setting of small
depth networks, and then generalizing it to concave functions.
In this more restricted setting of linear weights in the range
$[\ww_{\min}, \ww_{\max}]$, it produces a
$(1 - \epsilon)$-approximation in time $O(mD \epsilon^{-1} \log( \ww_{\max} /\ww_{\min}))$.
The algorithm combines a variety of tools and provides a unified approach
towards several problems involving small depth networks.
\end{abstract}

	\pagenumbering{gobble}
	
	\newpage
	
	\setcounter{page}{0}
	\pagenumbering{arabic}
	
	\section{Introduction}
\label{sec:intro}

Combinatorial problems have traditionally been solved using combinatorial
methods, e.g., see \cite{schrijver.book}.
Aside from providing many state-of-the-art results, these methods also have
the advantages of simplicity and interpretability of intermediate solutions.
Recently, algorithms that utilize continuous methods have led to performances
similar or better than these combinatorial methods~\cite{ChristianoKMST11,Madry13,AzO15}.
A significant advantage of continuous approaches is that they extend naturally
to general objective functions, whereas in combinatorial algorithms there
is much difference, for example, between even finding a maximum flow and
a minimum cost flow.

This paper is motivated by recent works that used accelerated gradient
descent to obtain $O(m\epsilon^{-1} \log{n} )$ time routines for computing
approximate solutions of positive linear programs~\cite{AzO15,WangRM15}.
Such linear programs include in particular (integral) maximum weighted matchings
on bipartite graphs, and fractional weighted matchings on general graphs.
These algorithms, as well as earlier works using continuous methods for the maximum
weighted matching problems~\cite{AhnG14}, represent a very different approach that
obtained bounds similar to the result by Duan and Pettie~\cite{DuanP14}; the latter
computed similar approximations using combinatorial methods.
We study this connection in the reverse direction to show that the
augmenting path based approaches can be extend to
1) small depth, acyclic networks, and
2) more general cost objectives.
Formally, given a DAG with source $s$ and sink $t$, along with
capacities $\cc_e$ and differentiable concave functions $f_e$
on each edge, we want to find a flow from $s$ to $t$ that
obeys capacity $\cc_e$ and maximizes the function
$\sum_{e} f_e(\xx_e)$.
Our main result to this direction is:
\begin{theorem}
\label{thm:main}
Given a depth $D$ acyclic network with positive capacities
and concave cost functions whose derivatives are between
$\ww_{\min}$ and $\ww_{\max}$ and queryable in $O(1)$ time,
\textsc{ConvexFlow} returns a $(1 - \epsilon)$-approximation to the maximum
weighted flow in $O \left( Dm \epsilon^{-1}   \log ( {\ww_{\max}} / {\ww_{\min}}) \log n \right)$ time. 
\end{theorem}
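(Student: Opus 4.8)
The plan is to place the problem inside the primal--dual scaling framework of Duan--Pettie, using the concavity of the $f_e$'s to linearize and the acyclicity and bounded depth to control augmenting paths. Maintain vertex potentials $y_v$, and at the current flow $\xx$ call the quantity $f_e'(\xx_e)\in[\ww_{\min},\ww_{\max}]$ the \emph{marginal weight} of edge $e=(u,v)$ and $f_e'(\xx_e)-(y_v-y_u)$ its \emph{reduced marginal weight}; an edge is \emph{eligible} when its reduced marginal weight is, up to the current granularity, nonnegative, and \emph{tight} when that quantity is (approximately) zero. The algorithm \textsc{ConvexFlow} will, at every scale, maintain an approximate complementary-slackness invariant tying the primal $\xx$ to the duals $y$: tight edges may carry flow, non-tight and non-saturated edges carry essentially none, and capacity-saturated edges are allowed extra dual slack. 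Since every $f_e$ is strictly increasing we have $f_e'>0$, so pushing more flow on an edge is never locally harmful; combined with acyclicity, the natural augmentations are forward $s$--$t$ paths of the original DAG, which have length at most $D$.

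\emph{Scaling and concavity.} Run $L=O(\log(\ww_{\max}/\ww_{\min}))$ scales with geometrically shrinking granularities $\delta_0=\Theta(\ww_{\max})$, $\delta_i=\delta_{i-1}/2$; at granularity $\delta_i$ the duals $y_v$ are stored rounded to multiples of $\delta_i$, so that by the last scale the rounding distortion, together with the round-truncation below, accounts for only an $\epsilon$ fraction of the optimum (treating a $\log\epsilon^{-1}$ as part of the logarithmic overhead). Between consecutive scales the potentials are refined --- the rounding of the $y_v$'s is halved --- the standard scaling transition, which preserves the invariant with the new, smaller error. Concavity is handled by (conceptually) replacing each edge $e$ by a parallel bundle of infinitesimal arcs, the arc carrying the flow in $[\theta,\theta+d\theta]$ having the \emph{linear} weight $f_e'(\theta)$; by concavity these weights are nonincreasing along the bundle, so an augmentation that increases $\xx_e$ always consumes the currently top-weight exposed arc and reveals a lower-weight one. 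Consequently the marginal weight $f_e'(\xx_e)$ is monotone nonincreasing under the algorithm's augmentations --- exactly the monotonicity that, in weighted matching, keeps an augmentation from ever creating a new tight edge and lets the potential argument go through. Bucketing marginal weights into the $O(\log(\ww_{\max}/\ww_{\min}))$ dyadic ranges makes each edge "active'' in only one bucket at a time, so this reduction to dynamically changing linear weights adds nothing beyond the $\log(\ww_{\max}/\ww_{\min})$ factor already present.

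\emph{Per-scale routine and running time.} Inside a scale, repeatedly find a maximal collection of disjoint eligible augmenting $s$--$t$ paths (a blocking flow in the eligible sub-DAG), push flow along them, decrease the affected marginal weights, and raise potentials on the vertices no longer reachable from $s$ in the eligible graph (the Hungarian/Duan--Pettie dual step). Because the eligible graph is a sub-DAG of depth at most $D$, one blocking-flow round costs $O(mD)$ by a Dinic-style DFS --- each edge is advanced along or retreated over $O(1)$ times and each augmenting path has at most $D$ edges --- and maintaining reachability, the potential buckets, and the per-edge derivatives adds an $O(\log n)$ overhead, so a round is $O(mD\log n)$. Following the Duan--Pettie analysis, $O(\epsilon^{-1})$ such rounds suffice to bring the flow to within the scale's granularity of the best flow consistent with the current potentials --- enough to pass to the next scale --- and the accumulated slack telescopes over the $L$ scales to an $O(\epsilon)$ fraction of the optimum, giving a $(1-\epsilon)$-approximation after rescaling $\epsilon$ by a constant. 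The total time is $L\cdot O(\epsilon^{-1})\cdot O(mD\log n)=O\!\big(Dm\,\epsilon^{-1}\log(\ww_{\max}/\ww_{\min})\log n\big)$.

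\emph{Main obstacle.} Two coupled points carry the real difficulty. First, the error analysis must be redone for the concave objective: unlike static weighted matching, the weights $f_e'(\xx_e)$ change \emph{during} a scale as flow is pushed, so one must show the approximate complementary-slackness invariant survives these in-scale changes and that $\delta_i$ still bounds the per-scale slack --- the monotonicity of $f_e'$ is the lever, but quantifying how the slack telescopes across all $L$ scales is delicate. Second, and most technical, is keeping the depth-$D$ bound on augmenting paths once the algorithm is allowed to decrease flow on an edge (the flow analogue of unmatching): residual reverse arcs can in principle create long augmenting paths, so one must argue either that the concave primal--dual structure never needs to decrease flow (every optimal flow decomposes into forward $s$--$t$ path flows, since the network is a DAG, and the dual-guided search reaches such a decomposition) or that the potentials force every eligible residual augmenting path to have $O(D)$ edges; establishing this cleanly is where most of the work lies.
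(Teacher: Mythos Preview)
Your high-level plan---scaling, blocking flows on an eligibility graph, and treating each concave edge as a bundle of parallel linear-weight arcs---matches the paper's approach. But the running-time accounting has a real gap. You place the factor $D$ in the cost of a single blocking-flow round (claiming $O(mD)$ via path length $\le D$) and then assert $O(\epsilon^{-1})$ rounds per scale ``following the Duan--Pettie analysis.'' Neither half holds as stated. First, the eligible graph contains backward residual arcs, so an $s$--$t$ path in it need not have length $\le D$; you flag this yourself as the main obstacle, yet your $O(mD)$ bound already presupposes it. Second, and independently, the number of dual-adjustment rounds per scale is \emph{not} $O(\epsilon^{-1})$: the potential $\pp_t$ is initialized to $\Theta(D\ww_{\max})$ (reduced costs along a depth-$D$ path telescope to $\pp_t-\pp_s$), must drop by $\Theta(D\delta_i/\epsilon)$ within scale $i$, and each dual adjustment lowers it by exactly $\delta_i$; this forces $\Theta(D/\epsilon)$ rounds per scale. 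The paper obtains the stated bound by pairing these $O(D/\epsilon)$ rounds with an $O(m\log n)$ blocking flow via dynamic trees, which is indifferent to path length and hence sidesteps the backward-arc issue entirely. If you keep your $O(mD)$ blocking flow and correct the round count you get $O(mD^{2}\epsilon^{-1}\log(\ww_{\max}/\ww_{\min}))$, one factor of $D$ too many.

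A second, smaller gap: ``the accumulated slack telescopes over the $L$ scales'' hides nontrivial work. Flow pushed forward at an early scale $j$ (with large $\delta_j$) may be partially cancelled later, so the lower bound on $\ww_e^{\pp}$ for a flow-carrying edge carries error proportional to $\delta_j$, not the current $\delta_i$. The paper controls this with a per-edge history variable $scale(e)$ (Definition~\ref{def:prev}) and a path-decomposition charging argument (Lemma~\ref{lem:charge}) that maps each backward push to an earlier forward push on the same edge and bounds $\sum_e l_e\delta_e^f\xx_e$ by $O(\epsilon)\sum_e\ww_e\xx_e$; this is precisely where the bound $\pp_t\ge D\delta_i/(2\epsilon)$ during scale $i$ is used (Lemma~\ref{lem:eligible}). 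Your sketch does not engage with this mechanism.
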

We note that the logarithmic dependency on $\ww_{\max} / \ww_{\min}$ is common
in scaling algorithms for flows: the case of linear costs corresponds to
$f_e(\xx_e) = \ww_e \xx_e$ for scalars $\ww_e$, and in turn
$\ww_{\min} = \min_e \ww_e$ and $\ww_{\max} = \max_e \ww_e$.
We will discuss the relation between this formulation and other flow
problems, as well this dependency on weights in Section~\ref{subsec:related}.

\begin{figure}[ht]
\vspace{-0.5cm}	
	\begin{algbox}
		\begin{enumerate}
			\item Initialize flow $\xx$, potentials $\pp$, step size $\delta$
			\item Repeat until $\delta$ is sufficiently small:
			\begin{enumerate}
				\item Repeat $O( 1 / \epsilon)$ steps:
				\begin{enumerate}
					\item Compute eligible capacities on the edges, forming the {\it eligible graph.}
					\item Send a maximal flow in the eligible graph.\label{step:blockingFlow}
					\item Recompute eligible graph, adjust potentials of vertices unreachable from $s$.
				\end{enumerate}
				\item Reduce $\delta$, and adjust the potentials.
			\end{enumerate}
		\end{enumerate}
	\end{algbox}
\vspace{-0.5cm}	
	\caption{Algorithmic Template for Our Routines}
	\label{fig:algoSketch}
\end{figure}

The main components of our algorithm are in Figure~\ref{fig:algoSketch}.
It involves the construction and flow routing on an \emph{eligiblility
graph}, which consists of all positive capacitiated edges calculated
from a set of eligibility rules.
Note that aside from the blocking-flow computation on this graph in
Step~\ref{step:blockingFlow}, all other steps are local
steps based on values on an edge, potentials of its two end points, and
an overall step size.
This algorithm in its fullest generalization is in Figure~\ref{alg:main}
in Section~\ref{sec:nonlinear}.

The paper is organized as follows. In Section~\ref{sec:simple}
we describe the simplest variant: max weight flow on unit-capacity, small depth
networks where the cost functions on edges are linear.
The main reason for presenting this simple setting is to show how we
trade running time with accuracy by appropriately relaxing eligibility
conditions for edges.
In Section~\ref{sec:scalingLinear}, we extend our algorithm to situations with
widely varying edge weights via a scaling scheme.
We then extend the result to concave functions in Section~\ref{sec:nonlinear},
leading to our main result as stated in Theorem~\ref{thm:main}.


\subsection{Our Techniques}
\label{subsec:techniques}

Our main technical contribution is a method for combining augmenting
paths algorithms with primal-dual schemes for flows with costs.
We combine blocking flows, which are essentially maximal sets
of augmenting paths, with the weight functions on edges via
eligibility rules that dictate which edge in the residual graph
can admit flow based on the current primal / dual solution.
Invariants that follow from these eligibility rules then
allow us to maintain approximate complementary
slackness between the primal and dual solutions.
In turn, these yield approximation guarantees.

Our eligibility rules are directly motivated by the scaling algorithms
for approximate maximum matching~\cite{DuanP14}, which is in turn
based on scaling algorithms for maximum weighted
matchings~\cite{GabowT89,GabowT91,Gabow17:arxiv}.
However, the fact that augmenting paths can contain edges from
different depth levels pose several challenges.
Foremost is the issue that a very heavily weighted edge may cause
flow to go through a lightly weighted edge.
As a result, we need to find a way to work around 
the key idea from~\cite{DuanP14} of
relating the weight of an edge to the first time a flow passes through it.
We do so by using paths as the basic unit by which we analyze our flows, and
analyzing the entire history of paths going through an edge (see Section~\ref{subsec:pathMap} for details).

We will develop our algorithms first in the case where all weights
on edges are linear, and then extend them to concave functions.
These extensions are obtained by viewing each edge with non-linear cost functions
as a collection of edges with differing weights and infinitesimal
capacities, and then simulating a continuous version of our linear
algorithm through blocking flows.

	\subsection{Related Works}
\label{subsec:related}

As our results combine techniques applicable to a wide
range of combinatorial problems, our results have similarities,
but also differ, from many previous works.
An incomplete list of them is in Figure~\ref{fig:related}.
In order to save space, we use $\tilde{O}(f(n))$ to hide
terms that are $polylog(f(n))$: this is a widely used convention
for the more numerical variants of these objectives, and here
mostly hides factors of $\log{n}$.
\begin{figure}
\begin{center}
\begin{tabular}{|c|c|c|c|c|}
\hline
reference & objective & graph & approximation & runtime \protect \footnotemark \\
\hline
\cite{GoldbergR98} & flow amount & any & $1 \pm \epsilon$ in value
	& $\tilde{O}(m^{3/2} \log(\epsilon^{-1}))$ \\
\cite{Orlin13} & flow amount & any & exact &  $\tilde{O}(mn)$\\
\cite{Cohen95} & flow amount & small depth & $1 \pm \epsilon$ in value
	&  $\tilde{O}(m D^2 \epsilon^{-3})$\\
\cite{Peng16} & flow amount & undirected & $1 \pm \epsilon$ in value
	&  $\tilde{O}(m \epsilon^{-3})$\\
{\bf our work} & flow amount & small depth & $1 \pm \epsilon$ in value
	& $\tilde{O}(m D \epsilon^{-1})$\\
\hline
\cite{LeeS14} &min linear cost & any & 
	$1 \pm \epsilon$ in value & $\tilde{O}(m^{3/2} \log(\epsilon^{-1}))$\\
\cite{Orlin93} &min linear cost & any & exact
	&  $\tilde{O}(m^2)$\\
\cite{GoldbergTarjan87} &min linear cost & any & exact
	&  $\tilde{O}(mn)$\\
	{\bf our work} &max linear weight & small depth & $1 \pm \epsilon$ in value
	& $\tilde{O}(m D  \epsilon^{-1} )$\\

\hline
\cite{GabowT89} & matching weight & any & $1 \pm \epsilon$ in value
	& $\tilde{O}(mn^{1/2} \log(\epsilon^{-1}))$ \\
\cite{DuanP14} & matching weight & any & $1 \pm \epsilon$ in value
	& ${O}(m \epsilon^{-1}\log(\epsilon^{-1}))$\\
\cite{AzO15} & matching weight & bipartite
	\protect \footnotemark
	& $1 \pm \epsilon$ in value
	& $\tilde{O}(m \epsilon^{-1} )$\\
\cite{AhnG14} & $b$-matching weight & any
	& $1 \pm \epsilon$ in value
	& $\tilde{O}(m \poly(\epsilon^{-1}))$\\
{\bf our work} & $b$-matching weight & bipartite & $1 \pm \epsilon$ in value
	& $\tilde{O}(m \epsilon^{-1} )$\\
\hline
\cite{Hochbaum07} & concave/convex & any & $1 \pm \epsilon$ pointwise 
	& $\tilde{O}(m^2 \log(\epsilon^{-1}))$\\
\cite{Vegh12} &min separable convex cost& any & exact & $\geq O(n^4)$\\
{\bf our work} & max concave weight& small depth & $1 \pm \epsilon$ in value
	& $\tilde{O}(m D \epsilon^{-1})$
	\protect \footnotemark 
	\\
\hline
\end{tabular}
\caption{Results Related to Our Algorithm 
}
\label{fig:related}
\end{center}
\end{figure}

While the guarnatees of many of the algorithms listed here
are stated for inputs with integer
capacities / costs~\cite{GabowT89,GoldbergR98,LeeS14}, their
running times are equivalent to producing $1 \pm \epsilon$-approximates
with $\poly(\log(\epsilon^{-1}))$ overhead in running time.

Our guarantees, which contain $\poly(\epsilon^{-1})$ in the
running time terms, closest resemble the approximation algorithms for
weighted matchings~\cite{DuanP14,AzO15,AhnG14}
and flows on small depth networks~\cite{Cohen95}.
These running times only outperform the routines with
$\poly(\log(\epsilon^{-1}))$ depdencies for mild ranges
of $\epsilon$.
In such cases, the solution can differ significantly point-wise
from the optimum: for example, deleting $\epsilon$-fraction
of a maximum cardinality matching still gives a solution that's
within $1 \pm \epsilon$ of the optimum.
As a result, the barriers towards producing exact solutions
in strongly polynomial time~\cite{Hochbaum07} are circumvented
by these assumptions, especially the bounds on the minimum
and maximum gradients.
These bounds are natural generalizations of the bounds on costs
in scaling algorithms for flows with costs~\cite{GabowT89}.
In some cases, they can also be removed via preprocessing steps
that find $\poly(n)$ crude approximations to the optimum,
and trim edge weights via those values~\cite{ChristianoKMST11,
DuanP14,Peng16}.
In particular, if all the capacities are integers, we can increase
the minimum weight, or gradient, in a way analogous to the matching
algorithm from~\cite{DuanP14}.

\addtocounter{footnote}{-2}
\footnotetext{	Many of the approximate algorithms assume integer capacities/weights,
	and have factors $\log C$ and/or $\log N$ in their running times,
	where $C$ and $N$ are the largest capacity and weight respectively.
	We omit those factors due to space constraints: they are discussed as dependencies
	on the accuracy $\epsilon$ in the text before Lemma~\ref{lem:padWMinGeneral}.}
\addtocounter{footnote}{1}
\footnotetext{This algorithm also can produce fractional matching solutions on general graphs.}
\addtocounter{footnote}{1}
\footnotetext{There is a hidden factor of logarithm of the ratio between max and min gradients.}

%
%
%

\begin{lemma}
\label{lem:padWMinGeneral}
If all capacities are integers and the functions $f_e$ are non-negative
and increasing, then we can assume
\[
\ww_{\min} \geq \frac{\epsilon \ww_{\max} }
{O\left(m^2 \cdot \sum_{e} \cc_e \right)},
\]
or $\ww_{\max} / \ww_{\min} \leq O ( \frac{m^2}{\epsilon} \cdot \sum_{e} \cc_e )$
without perturbing the objective by a factor of more than $1 \pm \epsilon$.
\end{lemma}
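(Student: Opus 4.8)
The plan is to replace every $f_e$ by a function whose derivative has been ``floored'' at a carefully chosen threshold $\ww_{\min}'$. This forces the minimum gradient up to $\ww_{\min}'$ while changing the value of every feasible flow by at most $\ww_{\min}'\sum_e\cc_e$, and the crux is to arrange that this additive slack is only an $\epsilon$ fraction of $\mathrm{OPT}$. First I would make two cost-free normalizations. Each $f_e$ is non-negative and increasing, so subtracting the constant $f_e(0)\ge 0$ from it shifts the objective of every flow by the same non-negative amount $\sum_e f_e(0)$, keeps $f_e$ concave and increasing, and changes no derivative (hence neither $\ww_{\min}$ nor $\ww_{\max}$); so we may assume $f_e(0)=0$. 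Likewise, an edge lying on no $s$--$t$ path carries zero flow in any feasible flow and now contributes $0$ to every solution, so we discard all such edges; afterwards every edge lies on an $s$--$t$ path.

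Next I would bound $\mathrm{OPT}$ from both sides. Concavity together with $f_e(0)=0$ gives $f_e(x)\le\ww_{\max}x$, hence $\mathrm{OPT}\le\ww_{\max}\sum_e\cc_e$. For the lower bound, let $e^{*}$ attain the maximum gradient $\ww_{\max}=f_{e^{*}}'(0)$; since $e^{*}$ lies on an $s$--$t$ path and all capacities are \emph{positive integers}, one full unit of flow can be pushed along that path---this is the only use of integrality. That feasible flow has value at least $f_{e^{*}}(1)$, and since the first unit through $e^{*}$ is its most valuable unit and its marginal value is governed by $\ww_{\max}$, this yields $\mathrm{OPT}=\Omega(\ww_{\max})$, with plenty of room for the $m^{2}$ factor in the statement. (Even the weaker $\mathrm{OPT}=\Omega(\ww_{\max}/m)$, obtained from a flow decomposition of that single-unit flow, would suffice.)

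With these bounds I would set $\ww_{\min}':=\dfrac{\epsilon\,\ww_{\max}}{C\,m^{2}\sum_e\cc_e}$ for a large absolute constant $C$ and define $\tilde f_e(x):=\int_0^{x}\max\{f_e'(t),\ww_{\min}'\}\,dt$. Each $\tilde f_e$ is non-negative, strictly increasing, differentiable (its derivative $\max\{f_e',\ww_{\min}'\}$ is continuous), and concave (that derivative is non-increasing, as a maximum of two non-increasing functions), with derivative in $[\ww_{\min}',\ww_{\max}]$ because $\ww_{\min}'<\ww_{\max}$; so the modified instance has gradient ratio $\le\ww_{\max}/\ww_{\min}'=O\!\left(m^{2}\epsilon^{-1}\sum_e\cc_e\right)$ and minimum gradient $\ge\ww_{\min}'$, exactly the claimed bound. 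Moreover $\tilde f_e\ge f_e$ pointwise, and for every feasible flow $\xx$,
\[
0\;\le\;\sum_e\bigl(\tilde f_e(\xx_e)-f_e(\xx_e)\bigr)\;=\;\sum_e\int_0^{\xx_e}\max\{\ww_{\min}'-f_e'(t),0\}\,dt\;\le\;\ww_{\min}'\sum_e\cc_e\;=\;\frac{\epsilon\,\ww_{\max}}{C\,m^{2}}\;\le\;\epsilon\cdot\mathrm{OPT},
\]
the last step using $\mathrm{OPT}=\Omega(\ww_{\max})$ and $C$ large. Hence $\mathrm{OPT}\le\mathrm{OPT}(\tilde f)\le(1+\epsilon)\mathrm{OPT}$, and any flow that is $(1-\epsilon)$-optimal for $\tilde f$ has value at least $(1-\epsilon)\mathrm{OPT}(\tilde f)-\epsilon\,\mathrm{OPT}\ge(1-2\epsilon)\mathrm{OPT}$ under the original objective; rescaling $\epsilon$ by a constant finishes the proof.

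The step I expect to be the real obstacle is the lower bound $\mathrm{OPT}=\Omega(\ww_{\max}/\poly(m))$: one must argue that a \emph{macroscopic} unit of flow through the steepest edge---available precisely because capacities are integral---already captures a constant fraction of $\ww_{\max}$ worth of value, which is exactly what fails if one reasons only about infinitesimal gradients. Everything downstream of that bound is the routine ``floor the gradient'' computation above.
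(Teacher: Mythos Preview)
Your lower bound $\mathrm{OPT}=\Omega(\ww_{\max})$ is the gap, and it is not repairable within your framework. Pushing one unit of flow through the edge $e^*$ with $f_{e^*}'(0)=\ww_{\max}$ gives value $f_{e^*}(1)$, but concavity only says $f_{e^*}(1)\le\ww_{\max}$, not $f_{e^*}(1)=\Omega(\ww_{\max})$. A concave increasing function can have an arbitrarily steep initial slope and still plateau almost immediately: take a single $s$--$t$ edge of capacity $1$ with $f(0)=0$, $f'(0)=M$, $f(1)=1$ (any smoothed version of $\min\{Mx,1\}$ with a tiny positive tail slope will do). Then $\ww_{\max}=M$ but $\mathrm{OPT}=1$, so your floor $\ww_{\min}'=\epsilon M/C$ already introduces additive error $\epsilon M/C\gg\epsilon\cdot\mathrm{OPT}$. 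No $\poly(m)$ slack rescues this, because there is simply no polynomial relationship between $\ww_{\max}$ and $\mathrm{OPT}$. You correctly identified this step as the obstacle, but the argument you give for it (``the first unit\ldots\ captures a constant fraction of $\ww_{\max}$'') is exactly what fails.

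The paper's proof avoids this by working with function \emph{values} rather than the raw gradient $\ww_{\max}$. It first exhibits a feasible flow $\xx'$ with at least $1/m^2$ on every edge (using integrality), so $\widehat{w}:=\max_e f_e(\epsilon/m^2)\le\mathrm{OPT}$. It then modifies \emph{both} ends of the gradient range relative to $\widehat{w}$: it adds a linear term of slope $\epsilon\widehat{w}/\sum_e\cc_e$ to every $f_e$ (this is your flooring step, but calibrated to $\widehat{w}$ instead of $\ww_{\max}$), and it \emph{also} replaces each $f_e$ on $[0,\epsilon/m^2]$ by the chord from the origin, capping the maximum slope at $O(m^2\epsilon^{-1}\widehat{w})$. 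Near-optimality of the truncated problem is recovered by mixing the optimum with $\epsilon\xx'$ so that every edge carries at least $\epsilon/m^2$ flow, where the truncation is invisible. The essential difference from your plan is that you try to leave $\ww_{\max}$ alone and only raise $\ww_{\min}$; the paper shows you must tame $\ww_{\max}$ as well.
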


\begin{proof}
First, note that the assumption about integral capacities means that
there exists a feasible flow that routes at least $1 / m^2$ per edge,
and such a flow, $\xx'$, can be produced using a single depth-first
search in $O(m)$ time.

We will fix the gradients to create functions $\overline{f}$
whose gradients are all within $poly(m, \epsilon^{-1})$ of
\[
\widehat{w} \defeq \max_{e} f\left(\frac{\epsilon}{m^2}\right).
\]
First, note that $OPT \geq \widehat{w}$ since the functions are
non-negative.
So we can add $\frac{\epsilon \widehat{w}}{\sum_{e} \cc_e}$ to the
gradient of all the functions without changing the optimum by more
than $\epsilon \widehat{w} \leq \epsilon OPT$ additively.
Formally the functions:
\[
\overline{f}_e \left( x \right)
\defeq f_e\left(x \right) + x \cdot \frac{\epsilon \widehat{w}}{\sum_{e'} \cc_{e'}}.
\]
with associated maximum $\overline{OPT}$ satisfies
 $OPT \leq \overline{OPT} \leq (1 + \epsilon) OPT$.

To fix the maximum gradient, consider the truncated functions
\[
\widehat{f}_e\left(x \right)
\defeq
\begin{cases}
\overline{f}_e\left(x \right) & \qquad
\text{if $x \geq \frac{\epsilon}{m^2}$, and}\\
\frac{x m^2}{\epsilon} \cdot \overline{f}_e\left( \frac{\epsilon}{m^2} \right) & \qquad
\text{if $x \leq \frac{\epsilon}{m^2}$}
\end{cases}.
\]
It can be checked that $\widehat{f}_e(x) \leq \overline{f}_e(x)$ at all points,
and that the maximum gradient of $\widehat{f}_{e}(x)$ is bounded by
\[
\frac{x m^2}{\epsilon} \cdot \overline{f}_{e}\left( \frac{\epsilon}{m^2} \right)
\leq O\left(m^2 \epsilon^{-1} \widehat{w}\right).
\]
So it remains to show that $\widehat{OPT}$, the maximum with the functions
$\widehat{f}$ satisfies $\widehat{OPT} \geq (1 - \epsilon) \overline{OPT}$.
For this, consider the optimum solution to $\overline{OPT}$, $\overline{\xx}$,
the solution
\[
\left( 1 - \epsilon \right) \overline{\xx} + \epsilon \xx'
\]
is feasible because it's a linear combination of two feasible solutions,
has objective at least $1- \epsilon$ times the objective of $\overline{\xx}$
due to the concavity and monotonicity of $f$, and has flow at least
$\epsilon / m^2$ per edge.
The result then follows from observing that $\widehat{f}_e(x) = \overline{f}_e(x)$
for all $x \geq \epsilon / m^2$.
\end{proof}

\paragraph{Remark} By Lemma~\ref{lem:padWMinGeneral}, if all capacities are integer, 
the running time of our result can be written as 
$\tilde{O}(Dm \epsilon^{-1} \log C )$, 
where $C$ is the largest capacity.

Our assumption on the topology of the graph follows
the study of small depth networks from~\cite{Cohen95},
which used the depth of the network to bound the number
of (parallel) steps required by augmenting path finding routines.
From an optimization perspective, small depth is also natural
because it represents the number of gradient steps required
to pass information from $s$ to $t$. Gradient steps, in the absence of preconditioners,
	only pass information from a node to its neighbor, so diameter
	is kind of required for any purely gradient based methods
The acyclicity requirement is a consequence of our
adaptation of tools from approximating weighted
matchings~\cite{DuanP14}, and represent a major shortcoming
of our result compared to exact algorithms for minimum cost
flow~\cite{Orlin93} and convex flows~\cite{Hochbaum07,Vegh12}.
However, as we will discuss in Appendix~\ref{sec:applications},
there are many applications of flows where the formulations
naturally lead to small depth networks.

	\section{Simple Algorithm for Unit Capacities and Linear Weights}
\label{sec:simple}

In this section, we propose a simple $(1-\epsilon)$ approximation algorithm
for the case of $f_e(\cdot)$ being linear functions,
and all capacities are unit.
This assumption means we can write the cost functions as:
\[
f_e(\xx_e) = \ww_e \xx_e
\]
for some $\ww_e  \geq 0$.
Under these simplifications, the linear program (LP) that we are solving and its dual become:
\begin{framed}
\setlength{\columnsep}{20pt}
  \begin{multicols}{2}
	\begin{minipage}[0pt]{10pt}
	\begin{align*}
	\max \qquad & \sum_{e} \ww_e\xx_e\\
	\forall u, \qquad & \sum_{vu} \xx_{vu} = \sum_{uv} \xx_{uv}\\
	\forall e, \qquad & 0 \leq \xx_e \leq 1
	\end{align*}
	\end{minipage}

	\columnbreak

	\begin{minipage}[0pt]{10pt}
	\begin{align*}
	\min \qquad & \sum_{e} \yy_{e}\\
	\forall e, \qquad & \pp_{v} - \pp_{u} + \yy_{e} \geq \ww_{e} \\
	& \pp_s = \pp_t \\
	\forall e, \qquad & \yy_{e} \geq 0.
	\end{align*}	
	\end{minipage}
  \end{multicols}
\end{framed}



We add a dummy directed edge from $t$ to $s$ with infinite capacity to ensure flow conservation at each vertex. 
For every edge $e$ we mention in the paper, we implicitly assume
that $e = uv$ is an edge in the original network and $e \not = ts$, unless otherwise stated.
The dual inequality involving $\yy_e$ is equivalent to:
\[
\yy_e \geq \ww_{e} + \pp_{u} - \pp_{v},
\]
The RHS is critical for our algorithm and its analysis.
We will define it as the reduced weight of the edge, $\ww_{e}^{\pp}$:
\[
\ww_{e}^{\pp} = \ww_{e} + \pp_{u} - \pp_{v}.
\]
Note that the choice of $\ww_{e}^{\pp}$ uniquely determines
the optimum choice of $\yy_e$ through
$\yy_e \leftarrow \max \left( 0, \ww_{e}^{\pp} \right)$.

To find an optimum flow on these networks
it is sufficient  to obtain feasible pairs $\xx$ and $\pp, \yy$ that satisfy
the following complementary slackness conditions:
\begin{enumerate}
	\item For all $e$, if $\xx_e = 0$, then $\yy_e = 0$,
		which is equivalent to $\ww^{\pp}_{e} \leq 0$.
	\item For all $e$, if $\xx_e = 1$ then $\yy_{e} = \ww^{\pp}_{e}$,
		which is equivalent to $\ww^{\pp}_{e} \geq 0$. 
\end{enumerate}

The role of these values of $\ww_{e}^{\pp}$ can be further
illustrated by the following Lemma, which
is widely used in scaling algorithms.
\begin{lemma}
	\label{lem:reducedCost}
	For any flow $\xx$ for which the flow conservation is satisfied,
	and any potential $\pp$ such that $\pp_t = \pp_s$, we have:
	\[
		\sum_{e} \ww_e \xx_e =
			\sum_{e} \ww_{e}^{\pp} \xx_e.
	\]
\end{lemma}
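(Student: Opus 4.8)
The plan is to reduce the claimed identity to showing that the potential terms contribute nothing, i.e. that $\sum_{e}\left( \pp_u - \pp_v \right) \xx_e = 0$, where (following the paper's convention) the sum runs over edges $e = uv$ of the original network, excluding the dummy arc $ts$. Once this is established, substituting $\ww_e^{\pp} = \ww_e + \pp_u - \pp_v$ into $\sum_e \ww_e^{\pp}\xx_e$ and splitting the sum gives $\sum_e \ww_e^{\pp}\xx_e = \sum_e \ww_e\xx_e + \sum_e\left(\pp_u - \pp_v\right)\xx_e = \sum_e \ww_e\xx_e$, which is exactly the claim.

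To evaluate $\sum_{e = uv}\left(\pp_u - \pp_v\right)\xx_e$, I would first add the dummy arc $ts$ back into the sum: since $\pp_t = \pp_s$, its contribution $\left(\pp_t - \pp_s\right)\xx_{ts}$ vanishes, so the restricted sum equals the sum over all arcs of the augmented network, on which flow conservation holds at every vertex. Next, regroup this sum by vertices rather than by edges: the summand $\pp_u$ appears once for each arc leaving $u$, and $-\pp_v$ once for each arc entering $v$, so the total equals $\sum_{w} \pp_w\left( \sum_{wv}\xx_{wv} - \sum_{vw}\xx_{vw} \right)$, i.e.\ $\sum_w \pp_w$ times the net flow out of $w$. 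Each parenthesized quantity is $0$ by flow conservation, so the sum is $0$.

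There is essentially no hard step here; the only point needing care is the bookkeeping around the dummy arc $ts$, since the paper's summation convention silently excludes it while flow conservation is stated on the augmented graph — and reconciling these is precisely what the hypothesis $\pp_t = \pp_s$ (equivalently the dual constraint $\pp_s = \pp_t$) is for. If one prefers not to reintroduce $ts$, the same conclusion follows by noting that among the original arcs the net flow out of $s$ is $\xx_{ts}$, the net flow out of $t$ is $-\xx_{ts}$, and every interior vertex contributes $0$, so the two boundary terms combine to $\left(\pp_s - \pp_t\right)\xx_{ts} = 0$.
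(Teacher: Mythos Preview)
Your proof is correct and follows essentially the same approach as the paper: expand $\ww_e^{\pp}$, regroup the potential terms by vertex, and use flow conservation together with $\pp_s = \pp_t$ to kill the boundary contribution. Your handling of the dummy arc $ts$ is more explicit than the paper's, but the argument is the same.
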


\begin{proof}
	Expanding the summation gives
	\[
	\sum_{e} \ww_{e}^{\pp} \xx_e = \sum_{e} \ww_e \xx_e +
	\sum_{e = uv} \xx_e \pp_u - \sum_{e = uv} \xx_e \pp_v.
	\]
	Rearranging the last terms gives
	\[
	\sum_{u} \pp_{u} \left( \sum_{uv} \xx_{uv} - \sum_{vu} \xx_{vu} \right),
	\]
	which we know is $0$ for every $u \neq s, t$.
	We also have that the amount of flow leaving $s$
	is same as the amount of flow entering $t$, and $\pp_s = \pp_t$
	so the total contribution of these terms is $0$.
\end{proof}

 Note that if $\pp_s=0$ and $\ww_{e}^{\pp}$ is close to $0$ for each $e$ in a path from $s$ to some $u$,
the total reduced cost of the path is close to $\pp_u$.
This is the view taken by most primal-dual algorithms for approximate
flows: conditions are imposed on valid paths in order to make
$\pp_u$s emulate max weight residual paths to the source.
These routines then start with $\pp_t$ set to very large,
and maintaining a complementary primal/dual pair while
gradually reducing $\pp_t$.
Shortest path algorithms such as the network simplex method
and the Hungarian algorithm fall within this framework
~\cite[Chapter~12]{AhujaMO93}.
Such algorithms produce exact answers, but may take
$\Omega(n)$ steps to converge even in the unit weighted case
due to very long augmenting paths.

The central idea in our algorithms is to regularize these
complementary slackness conditions.
Specifically, we relax the $\xx_e = 0$ case from $\ww_{e}^{\pp} \leq 0$
to $\ww_{e}^{\pp} \leq \delta$, for some $\delta$ which we eventually
set to $\epsilon \ww_{\min}$. We may assume that all the weights are integer multiples of $\delta$ since the 
additive rounding error of each weight $\ww_e$ is at most $\epsilon \ww_e$.
We use a set of eligibility rules to define when we can send flows
along edges, or whether an arc is eligible.
The modified eligibility conditions are shown in Figure~\ref{fig:elgSimple}.

\begin{figure}[ht]
	\begin{algbox}
		\begin{enumerate}
			\item Forward edge: if $\xx_{e} = 0$, can send flow along $u \rightarrow v$
			if $\ww_{e}^{\pp} = \delta$.
			\item Backward edge: if $\xx_{e} = 1$, can send flow back from $v \rightarrow u$ if $\ww_{e}^{\pp} = 0$.
		\end{enumerate}
	\end{algbox}
	\caption{Eligiblity Rules for
		Simple Algorithm on Unit Capacity Networks
		with Linear Weights.}
	\label{fig:elgSimple}
\end{figure}

\begin{figure}[ht]
\begin{algbox}
	\begin{enumerate}[ label = (A\arabic*),ref=(A\arabic*)]
		\item \label{a1} If $\xx_{e} = 1$, then $\ww^{\pp}_{e} \geq 0$. 
		\item \label{a2} If $\xx_{e} = 0$, then $\ww^{\pp}_{e}  \leq \delta$.
	\end{enumerate}
\end{algbox}
\caption{Relaxed Invariant for
Simple Algorithm on Unit Capacity Networks
with Linear Weights.}
\label{fig:invSimple}
\end{figure}

These rules leads our key construct, the eligible graph:
\begin{definition}
\label{def:eligibleGraph}
Given a graph $G$ and a set of eligibility rules, the \emph{eligible
graph} consists of all eligible residual arcs with positive capacities, equaling
to the amount of flow that can be routed on them. A vertex $v$ is an \emph{eligible vertex} if there exists a path from $s$ to $v$ in the eligible graph.
\end{definition}

The main idea behind working on this graph is that this
approximate invariant can be maintained analogously to
exact complementary slackness.

This leads to an instantiation of the algorithmic template from
Figure~\ref{fig:algoSketch}.
Its pseudocode is given in Figure~\ref{alg:simple}.




\begin{figure}[ht]

\begin{algbox}
$\xx = \textsc{SimpleFlow}\left(G, s, t,
	\left\{\ww_e\right\}, \ww_{\min}, \ww_{\max}, \epsilon \right)$
	
		\textbf{Input:} Unit capacity directed acyclic graph $G$ from $s$ to $t$,\\
		\qquad Weights $\ww_e$ for each edge $e$, along with 
			 bounds $\ww_{\min} = \min_{e} \ww_{e}$ and $\ww_{\max} = \max_{e} \ww_e$.\\
		\qquad	 Error tolerance $\epsilon$.
		
		\textbf{Output:} $(1 - \epsilon)$ approximate maximum weight flow $\xx$.

\begin{enumerate}
	\item Compute levels $l_u$ of the vertices $u$ (length of a longest path from $s$ to $u$). \\
		Initialize $\xx_{e} = 0$, $\pp_u = \ww_{\max} \cdot l_u$, $\delta = \epsilon \ww_{\min}$.
	\item Repeat until $\pp_t = 0$:
	\begin{enumerate}
		\item \emph{Eligibility labeling:} Compute the eligible graph.
		\item \emph{Augmentation:} Send 1 unit of flow along a maximal collection of edge-disjoint paths in the eligible graph.
		\item \emph{Dual Adjustment:}  Recompute the eligible graph,
		for each vertex $u$ unreachable from $s$,
		set $\pp_{u} \leftarrow \pp_{u} - \delta$.
	\end{enumerate}
\end{enumerate}

\end{algbox}

\caption{Simple Algorithm Based on the Eligibility Conditions
From Figure~\ref{fig:elgSimple}}

\label{alg:simple}

\end{figure}

The key property given by the introduction of $\delta$ is after sending
$1$ unit of flow along a maximal collection of paths, $t$ is no longer reachable from $s$
in the eligible graph.
We can show this from the observation that all edges in a path
that we augment along become ineligible.
For an eligible forward edge $uv$, $\xx_{uv} = 0$ and $\ww_{uv}^{\pp} = \delta$. Therefore, after augmentation, $\xx_{uv} = 1$ and $vu$ is ineligible. A similar argument can be applied for an eligible backward edge $vu$ to show that after augmentation the residual forward edge $uv$ is also ineligible.

This fact plays an important role in our faster convergence rate:
\begin{lemma} The running time of the algorithm given in Figure~\ref{alg:simple} is $O(D \ww_{\max} m/\epsilon \ww_{\min} )$ where $D$ is the depth of the network.
\end{lemma}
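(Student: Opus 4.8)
The plan is to bound the two nested loops separately. First I would argue that the outer loop (the "Repeat until $\pp_t = 0$" loop) runs at most $O(D \ww_{\max} / (\epsilon \ww_{\min}))$ times. Initially $\pp_t = \ww_{\max} \cdot l_t \leq \ww_{\max} \cdot D$, since $l_t$, the length of a longest path from $s$ to $t$, is at most the depth $D$. I then need to show $\pp_t$ decreases by exactly $\delta = \epsilon \ww_{\min}$ in each iteration. This is where the key fact quoted just before the lemma enters: after sending one unit of flow along a maximal collection of edge-disjoint paths, $t$ is no longer reachable from $s$ in the (recomputed) eligible graph, so $t$ is among the vertices unreachable from $s$ and its potential is decremented by $\delta$ in the Dual Adjustment step. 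Hence after $k$ iterations $\pp_t = \ww_{\max} D - k\delta$, and the loop terminates once this hits $0$ (one should check, or it follows from the invariants, that $\ww_{\max} D$ is an integer multiple of $\delta$, or simply that the loop stops the first time $\pp_t \le 0$), giving at most $\ww_{\max} D / (\epsilon \ww_{\min})$ outer iterations.

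Next I would bound the cost of a single outer iteration by $O(m)$. Computing levels is done once at the start in $O(m)$. Within an iteration: computing the eligible graph is a scan over all edges checking the local eligibility conditions of Figure~\ref{fig:elgSimple}, which is $O(m)$; sending one unit of flow along a maximal collection of edge-disjoint paths in the eligible graph is a blocking-flow-style computation, and because the eligible graph is a subgraph of the acyclic graph $G$ this takes $O(m)$ time (each edge is traversed forward and retreated past at most a constant number of times in the DFS that builds a maximal set of disjoint paths on a DAG); recomputing the eligible graph and the reachability set from $s$ is another $O(m)$ pass; and decrementing potentials of unreachable vertices is $O(n) = O(m)$. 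So each outer iteration is $O(m)$.

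Multiplying the two bounds gives total running time $O\!\left(D \ww_{\max} m / (\epsilon \ww_{\min})\right)$, as claimed. The main obstacle I anticipate is not the arithmetic but justifying the per-iteration $O(m)$ bound for Step~(b): one must be careful that "a maximal collection of edge-disjoint paths" can indeed be found in linear time on the eligible graph, which relies on acyclicity (so that a single DFS with edge deletion suffices, rather than a layered blocking-flow computation with its extra $D$ factor). I would also want to double-check that the eligible graph, although it contains backward residual arcs, remains acyclic in the relevant sense — or more robustly, simply invoke that a maximal set of augmenting paths of length at most $D$ can be found in $O(mD)$ time and note that this would only cost an extra factor of $D$, which the lemma's statement does not claim; so I would lean on the acyclic DFS argument to get the clean $O(m)$ per iteration.
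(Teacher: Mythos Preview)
Your proposal is correct and follows essentially the same argument as the paper: count the outer iterations as $D\ww_{\max}/\delta = D\ww_{\max}/(\epsilon\ww_{\min})$ by noting that $\pp_t$ drops by $\delta$ each round (because $t$ becomes unreachable after the maximal augmentation), and bound each iteration by $O(m)$ via DFS. The paper's proof is terser---it simply asserts the $O(m)$ DFS bound without addressing your worry about backward arcs and acyclicity of the eligible graph---but the structure is identical.
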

\begin{proof} The algorithm starts with $\pp_t = D \ww_{max}$ and in each iteration, $\pp_t$ is decreased by $\delta$. Therefore, the number of iterations is $D \ww_{\max} / \delta$. Since in each iteration it takes $O(m)$ to find a maximal set of disjoint augmenting paths using depth-first search, the total running time is $O(D \ww_{\max} m/\epsilon \ww_{\min} )$.
\end{proof}

We also need to check that our eligibility rules and augmentation
steps indeed maintain the invariants throughout this algorithm.

\begin{lemma}
	\label{lem:simpleInvariants}
	Throughout the algorithm, the invariants from Figure~\ref{fig:invSimple} are maintained.
\end{lemma}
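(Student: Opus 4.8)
The plan is to prove both invariants \ref{a1} and \ref{a2} simultaneously by induction over the sequence of operations performed by \textsc{SimpleFlow}: the initialization, and then, inside each iteration of the main loop, the augmentation step followed by the dual-adjustment step. The eligibility-labeling step changes neither $\xx$ nor $\pp$, so it needs no attention. Before starting the induction I would record one structural fact that drives the whole argument: since we may assume every weight $\ww_e$ is an integer multiple of $\delta$, the initial potentials $\pp_u = \ww_{\max}\cdot l_u$ are integer multiples of $\delta$, and every potential update subtracts exactly $\delta$; hence at all times every $\pp_u$, and therefore every reduced weight $\ww_e^{\pp} = \ww_e + \pp_u - \pp_v$, lies in $\delta\mathbb{Z}$. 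This is what will let me upgrade a strict inequality against $\delta$ (resp. against $0$) into an inequality against $0$ (resp. against $\delta$), which is exactly the extra room the proof needs.

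For the base case $\xx\equiv 0$, so \ref{a1} is vacuous, and for \ref{a2} note that for every edge $e = uv$ of the DAG the longest-path levels satisfy $l_v \geq l_u + 1$, so $\ww_e^{\pp} = \ww_e + \ww_{\max}(l_u - l_v) \leq \ww_e - \ww_{\max} \leq 0 \leq \delta$. For the augmentation step $\pp$ is unchanged, and $\xx_e$ changes only for edges $e$ whose forward or backward residual arc lies on one of the augmenting paths; with unit capacities exactly one of the two residual arcs of $e$ exists at any moment, so there is no conflict. If the forward arc $u\to v$ of $e$ is used, then $e$ was eligible with $\xx_e=0$, so $\ww_e^{\pp}=\delta$, and after the push $\xx_e=1$, so \ref{a1} holds because $\delta \geq 0$. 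If the backward arc $v\to u$ of $e$ is used, then $e$ was eligible with $\xx_e=1$, so $\ww_e^{\pp}=0$, and after the push $\xx_e=0$, so \ref{a2} holds because $0\leq\delta$. Every other edge is untouched, so its invariant carries over from the inductive hypothesis.

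The dual-adjustment step is the heart of the argument. Let $S$ be the set of vertices reachable from $s$ in the recomputed eligible graph, so the potentials of the vertices in $\overline{S} = V\setminus S$ drop by $\delta$, and $\ww_e^{\pp}$ for $e=uv$ changes only when exactly one endpoint lies in $\overline{S}$. If $u\in S$ and $v\in\overline{S}$, then $\ww_e^{\pp}$ increases by $\delta$, which can only threaten \ref{a2} and only when $\xx_e=0$; but then the forward residual arc $u\to v$ exists, and it cannot be eligible, for otherwise $v$ would be reachable from $s$ through $u$, so $\ww_e^{\pp}\neq\delta$, and combining $\ww_e^{\pp}\leq\delta$ (which holds at the start of this step by the induction hypothesis together with the previous paragraph) with $\ww_e^{\pp}\in\delta\mathbb{Z}$ gives $\ww_e^{\pp}\leq 0$; hence after the drop $\ww_e^{\pp}\leq\delta$. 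Symmetrically, if $u\in\overline{S}$ and $v\in S$, then $\ww_e^{\pp}$ decreases by $\delta$, threatening \ref{a1} only when $\xx_e=1$; then the backward residual arc $v\to u$ exists and cannot be eligible, for otherwise $u$ would be reachable from $s$ through $v$, so $\ww_e^{\pp}\neq 0$, and with $\ww_e^{\pp}\geq 0$ and $\ww_e^{\pp}\in\delta\mathbb{Z}$ we get $\ww_e^{\pp}\geq\delta$, so after the drop $\ww_e^{\pp}\geq 0$. In both cases the other invariant moves in the harmless direction, and when both endpoints lie on the same side of the cut $\ww_e^{\pp}$ does not change at all; this closes the induction.

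The one genuine obstacle is exactly this pair of cases in which the reduced weight moves against the invariant: there the crude estimate only yields $\ww_e^{\pp}\leq 2\delta$ (resp. $\ww_e^{\pp}\geq -\delta$), which is weaker than what we need, and the resolution is the combination of (i) the edge being \emph{ineligible} because its un-reached endpoint lies in $\overline{S}$, and (ii) every reduced weight living on the grid $\delta\mathbb{Z}$, which together buy back the missing $\delta$. Everything else is routine bookkeeping over the three steps of the loop.
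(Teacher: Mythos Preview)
Your proof is correct and follows essentially the same approach as the paper: induction over initialization, augmentation, and dual adjustment, with the same case split on which endpoint of an edge lies in the reachable set $S$. The only noteworthy difference is that you make explicit the role of the grid $\delta\mathbb{Z}$ (using it to upgrade $\ww_e^{\pp}\leq\delta$ and $\ww_e^{\pp}\neq\delta$ to $\ww_e^{\pp}\leq 0$), whereas the paper phrases the same step more informally as ``the reduced weight can only increase until it is equal to $\delta$, at which point both endpoints become eligible''; your version is arguably clearer about why exactly one step of size $\delta$ cannot overshoot.
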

\begin{proof}
	
	It is easy to check that at the beginning of the algorithm, the invariants are satisfied.
	We will show by induction that dual adjustments and augmentations do not break them.
	Assume the two conditions hold before a dual adjustment step.
	Notice that a dual adjustment can only change reduced weights of edges that have exactly one eligible endpoint.
	For an edge $e = uv$, we consider 2 cases:
	\begin{itemize}
		\item $u$ is eligible, and $v$ is ineligible: a dual adjustment step decreases $\pp_v$, and thus increases $\ww^{\pp}_{e}$. Condition \ref{a1} still holds because the reduced weight  increases. Condition \ref{a2} still holds because reduced weight can only increase until it is equal to $\delta$ 
		and both $u$ and $v$ are eligible.
		\item $v$ is eligible, and $u$ is ineligible:  a dual adjustment step decreases $\pp_u$, and thus decreases $\ww^{\pp}_{e}$. Condition \ref{a1} still holds because the reduced weight can only decrease until it is equal to 0 
		and both $u$ and $v$ are eligible. Condition \ref{a2} still holds because the reduced weight decreases. 
	\end{itemize}
	Now assume that the two conditions hold before an augmentation step, and notice that we only send flow along eligible edges. For an edge $e$ with $\xx_{e} = 0$ before augmentation, the eligibility condition is $ \ww_{e}^{\pp} = \delta$, thus after augmentation $\xx_{e} = 1$ and condition \ref{a1} is satisfied. For an edge $e$ with  $\xx_{e} = 1$ before augmentation, the eligibility condition is $ \ww_{e}^{\pp} = 0$, thus after augmentation $\xx_{e} = 0$ and condition \ref{a2} is satisfied.
\end{proof}

It remains to bound the quality of the solution produced.
Once again, the invariants from Figure~\ref{fig:invSimple}
play crucial roles here.



 
\begin{lemma} \label{lem:relaxedComplementarity}
	If all the invariants and feasibility conditions hold 
	when the algorithm terminates with $\pp_s = \pp_t$,
	then the solution produced, $\xx$, is feasible and $(1 - \epsilon)$-optimal.
\end{lemma}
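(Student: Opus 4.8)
The plan is to use the two invariants \ref{a1} and \ref{a2} to establish an approximate complementary slackness between $\xx$ and a suitably chosen dual $(\pp, \yy)$, then quantify the duality gap. First I would set $\yy_e \leftarrow \max(0, \ww_e^{\pp})$, which makes $(\pp,\yy)$ dual-feasible by construction (the constraint $\pp_s = \pp_t$ holds by the termination condition). Lemma~\ref{lem:reducedCost} then gives $\sum_e \ww_e \xx_e = \sum_e \ww_e^{\pp} \xx_e$, so the primal objective equals the reduced-weight objective; this is the key identity that lets the potentials drop out.

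Next I would bound the gap between $\sum_e \ww_e^{\pp}\xx_e$ and the dual objective $\sum_e \yy_e$ edge by edge. For an edge with $\xx_e = 1$, invariant \ref{a1} gives $\ww_e^{\pp} \ge 0$, so $\yy_e = \ww_e^{\pp} = \ww_e^{\pp}\xx_e$ and the edge contributes zero to the gap. For an edge with $\xx_e = 0$, invariant \ref{a2} gives $\ww_e^{\pp} \le \delta$, hence $\yy_e = \max(0, \ww_e^{\pp}) \le \delta$, while $\ww_e^{\pp}\xx_e = 0$; so this edge contributes at most $\delta$ to $\sum_e \yy_e - \sum_e \ww_e^{\pp}\xx_e$. (On the dummy arc $ts$ the weight is $0$ and there is no $\yy$ variable, so it is irrelevant.) Summing over all $m$ edges yields
\[
\sum_e \yy_e \le \sum_e \ww_e^{\pp}\xx_e + m\delta = \sum_e \ww_e \xx_e + m\delta.
\]
By LP weak duality, $OPT \le \sum_e \yy_e$, so $\sum_e \ww_e\xx_e \ge OPT - m\delta$.

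Finally I would convert the additive error $m\delta$ into a multiplicative one. With $\delta = \epsilon \ww_{\min}$, we have $m\delta = \epsilon m \ww_{\min}$; I would compare this against a lower bound on $OPT$. Since the graph is a DAG from $s$ to $t$ with a dummy $ts$ arc, any $s$–$t$ path $P$ carries one unit of flow with value $\sum_{e \in P} \ww_e \ge \ww_{\min}$, so $OPT \ge \ww_{\min}$. This only gives $\sum_e \ww_e \xx_e \ge (1 - \epsilon m) OPT$, which is too weak; the main obstacle is getting the right normalization. I expect the intended argument routes more carefully: the $m\delta$ error should really be compared against the number of augmenting paths times their minimum weight, or one rescales $\epsilon \leftarrow \epsilon / m$ (absorbed into the $\tilde O$ notation), or — most likely, matching the Duan–Pettie analysis — one notes that each unit-capacity augmenting path used in the optimal solution has at most $D$ eligible edges, so the per-path slack is at most $D\delta$ against a per-path value of at least $\ww_{\min}$, giving the bound once $\delta \le \epsilon \ww_{\min}/D$ or after folding constants appropriately. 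I would reconcile the bookkeeping so that the accumulated slack is charged against $OPT$ at the stated $(1-\epsilon)$ rate, using that the support of $\xx$ and of the optimal flow both decompose into few short paths.
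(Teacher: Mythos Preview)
Your approach has a genuine gap in the final step, and the fix you're groping toward is not the one the paper uses.

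The issue is that by going through the dual objective $\sum_e \yy_e$, you accumulate slack $\delta$ on \emph{every} edge with $\xx_e = 0$, giving an additive error of order $m\delta$. None of your proposed repairs (rescaling $\epsilon$ by $m$, path decompositions, requiring $\delta \le \epsilon \ww_{\min}/D$) match the stated $\delta = \epsilon \ww_{\min}$, and they all either weaken the theorem or add hypotheses that are not there.

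The paper's argument avoids the dual objective entirely. It applies Lemma~\ref{lem:reducedCost} to \emph{both} $\xx$ and an optimal $\xx^*$, so that it suffices to compare $\sum_e \ww_e^{\pp}\xx_e$ to $\sum_e \ww_e^{\pp}\xx^*_e$ edge by edge. When $\xx_e = 1$, unit capacities force $\xx^*_e \le \xx_e$ and invariant~\ref{a1} gives $\ww_e^{\pp} \ge 0$, so $\ww_e^{\pp}\xx_e \ge \ww_e^{\pp}\xx^*_e$. When $\xx_e = 0$, invariant~\ref{a2} gives $\ww_e^{\pp} \le \delta$, so $\ww_e^{\pp}\xx_e = 0 \ge (\ww_e^{\pp} - \delta)\xx^*_e$; and now the key step you are missing: since $\delta = \epsilon\ww_{\min} \le \epsilon \ww_e$, this is at least $\ww_e^{\pp}\xx^*_e - \epsilon \ww_e \xx^*_e$. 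Summing, the total error is $\epsilon \sum_e \ww_e \xx^*_e = \epsilon\,OPT$, not $m\delta$. The point is that the slack $\delta$ is charged only on edges weighted by $\xx^*_e$, and on each such edge $\delta$ is dominated by $\epsilon \ww_e$ --- so the error is automatically a fraction of the optimal objective, with no need to lower-bound $OPT$ or decompose into paths.
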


\begin{proof}
	The flow $\xx$ is feasible because in each augmentation step we only send one unit of flow along a path in the residual graph.
	
	Let $\xx^*$ be any other $st$ flow obeying the capacities.
	Our goal is to show that
	\[
	\sum_{e} \ww_e \xx_e \geq \left( 1 - \epsilon \right) \sum_{e} \ww_e \xx^*_e.
	\]
	
	To do so, we first invoke Lemma~\ref{lem:reducedCost}, which
	allows us consider the weights $\ww_{e}^{\pp}$ instead.
	We invoke the invariants based on the two cases:
	\begin{enumerate}
		\item If $\xx_e = 1$, then $\xx^*_e \leq \xx_e$
			since the edges are unit capacitated.
			The invariant also gives $\ww_e^{\pp} \geq 0$, which implies
				\[
				\xx_e \ww_e^{\pp} \geq \xx^*_e \ww_e^{\pp}.
				\]
		\item If $\xx_e = 0$, then $\ww_e^{\pp} \xx_e = 0$.
			The invariant gives $\ww_e^{\pp} \leq \delta$,
			which combined with $\xx^*_e \geq 0$ gives:
			\[
				\xx_e \ww_e^{\pp} = 0 \geq \xx^*_e \left( \ww_e^{\pp}
					- \delta \right).
			\]
			Furthermore, since $\delta \leq \epsilon \ww_e$,
			this also implies
			\[
				\xx_e \ww_e^{\pp} \geq \xx^*_e \ww_e^{\pp}
					- \epsilon \ww_e \xx^*_{e}.
			\]
	\end{enumerate}
	Summing the implications of these invariants across all edges gives
	\[
		\sum_{e} \xx_e \ww_e = \sum_{e} \xx_e \ww_e^{\pp}
		\geq \sum_{e} \xx^*_e \ww_e^{\pp} - \epsilon \sum_{e} \xx^*_e \ww_e
		= \left( 1 - \epsilon \right) \sum_{e} \xx^*_e \ww_e.
	\]
\end{proof}

	\section{Scaling Algorithm for Linear Weights}
\label{sec:scalingLinear}

In this section, we show how to improve the running time of the algorithm in the previous section through scaling.
The scaling algorithm makes more aggressive changes to vertex
potentials $\pp_u$ at the beginning, and moves to progressively smaller changes.
To be more precise, the algorithm starts with $\delta_{0} = \epsilon \ww_{max}$
and ends with $\delta_{T} = \epsilon \ww_{min}$,
and the value of $\delta$ is halved each scale.
Assume $\epsilon$ and $\ww_{\max} / \ww_{\min}$ are powers of 2, this immediately implies
\[
T = \log\left({\ww_{\max}} / {\ww_{\min}}\right).
\]

The vertex potentials are also initialized in a similar way, i.e $\pp_u = \ww_{\max} \cdot l_u$ for all $u$.
Scale $i > 0$ starts with
\[
\pp_t = {D\ww_{max}} / {2^{i}} + 2 D\delta_{i}
\]
and ends when
\[
\pp_t = D\ww_{max}/2^{i+1}. 
\]
The change in step sizes leads to more delicate invariants.
Instead of having a sharp cut-off at $\delta$ as in the
single-scale routine, we further utilize $\delta_{i}$ to
create a small separation between edges eligible in the forward
and backward directions.

Between scales, we make adjustments to all potentials
to maintain the invariants.
Pseudocode of this routine is given in Figure~\ref{alg:scalingLinear}.
\begin{figure}[ht]
	\begin{algbox}
		$\textsc{ScalingFlow}\left(G, s, t, 
			\left\{\ww_e, \cc_e\right\}, \ww_{\min}, \ww_{\max}, \epsilon \right)$
			
		\textbf{Input:} Directed acyclic graph $G$ from $s$ to $t$,\\
		\qquad weights $\ww_e$ for each edge $e$, along with 
		bounds $\ww_{\min} = \min_{e} \ww_{e}$ and $\ww_{\max} = \max_{e} \ww_e$,\\
		capacities $\cc_e$ for each edge $e$, \\
		\qquad	 error tolerance $\epsilon$.
			
		\textbf{Output:} $(1 - \epsilon)$ approximate maximum weight flow $\xx$.
			
		\begin{enumerate}
			\item Compute levels $l_u$ of the vertices $u$ (length of a longest path from $s$ to $u$). \\
			  	Initialize $\xx = 0, \pp_u = \ww_{\max} \cdot l_u$, $\delta_0 = \epsilon \ww_{max}.$
			
			\item For i from 1 to $T = \log \frac{\ww_{max}}{\ww_{min}}$ :
			\begin{enumerate}
				\item Repeat until $\pp_t = \frac{D \ww_{max}}{2^{i+1}}$ if $i<T$ or until $\pp_t = 0$ if $i = T$:	\label{ln:outerLoop}
				\begin{enumerate}
					\item \emph{Eligibility labeling:} Compute the eligible graph based on the eligibility rules:
						\begin{enumerate}
							\item $uv$ is eligible in the forward direction if $\delta_{i} \leq \ww_{uv}^{\pp}$,
							\item a backward edge $vu$ is eligible if $\ww_{uv}^{\pp} \leq 0$.
						\end{enumerate}
					\item \emph{Augmentation:}
					Send a blocking flow in the eligible graph.\label{ln:augmentation}
					\item \emph{Dual Adjustment:}  Recompute the eligible graph,
				for each vertex $u$ unreachable from $s$,
				set $\pp_{u} \leftarrow \pp_{u} - \delta_i$. \label{ln:dualAdjust}
				\end{enumerate}
				\item If $i<T$:
				\begin{enumerate}
					\item \emph{Dual Rescale:} $\pp_u \leftarrow \pp_u + \delta_{i} l_u$, \label{ln:dualAdjust} \label{ln:dualRescale}
					\item $\delta_{i+1} \leftarrow \frac{\delta_{i}}{2}.$
				\end{enumerate}
			\end{enumerate}
		\end{enumerate}
	\end{algbox}
	\caption{Scaling Algorithm Based on the Eligibility Condition From Figure \ref{fig:invElgScaling}.}
	\label{alg:scalingLinear}
\end{figure}


The rest of this section is a more sophisticated analysis
built upon Section~\ref{sec:simple} and the scaling algorithms
from~\cite{DuanP14}, leading to the following main claim:
\begin{theorem} \label{thm:factorLinear}
	For any $\epsilon < 1/10$, \textsc{ScalingFlow} returns $\xx$
	that is an $(1 - 8 \epsilon)$-approximation in
	$O \left( {Dm \epsilon^{-1}}   \log ( {\ww_{max}} / {\ww_{min}})  \log n \right)$ time.
\end{theorem}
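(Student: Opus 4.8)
The plan is to parallel the analysis of the simple algorithm from Section 2 at each scale $i$, controlling two quantities: the running time per scale, and the approximate complementary slackness invariants, with the subtlety that the dual rescale step between scales must be shown to preserve a (slightly relaxed) version of the invariants. First I would set up the scale-$i$ invariants, a two-sided version of Figure 3: namely that if $\xx_e > 0$ on some residual arc then $\ww_e^{\pp} \ge 0$, and if $\xx_e$ is below capacity then $\ww_e^{\pp} \le \delta_i$ — more precisely I expect the clean statement to be ``$\ww_e^{\pp} \ge 0$ whenever the backward arc $vu$ carries flow'' and ``$\ww_e^{\pp} \le \delta_i$ whenever the forward arc $uv$ has residual capacity'', matching the eligibility rules stated in Figure 6 ($uv$ forward-eligible iff $\delta_i \le \ww_{uv}^{\pp}$; $vu$ backward-eligible iff $\ww_{uv}^{\pp} \le 0$). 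The initialization $\pp_u = \ww_{\max} l_u$ makes every forward edge have $\ww_e^{\pp} \le 0 \le \delta_0$, so the invariants hold at the start.

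Next I would verify the three inner steps preserve these invariants within a fixed scale, exactly as in Lemma 2.10: (a) eligibility labeling changes nothing; (b) a blocking-flow augmentation only pushes flow along eligible arcs, and pushing along a forward-eligible arc ($\ww_{uv}^{\pp} \ge \delta_i \ge 0$) makes the new backward arc satisfy $\ww_e^{\pp}\ge 0$, while saturating it keeps the forward-residual constraint vacuous; symmetrically for backward-eligible arcs; (c) a dual adjustment only touches edges with exactly one eligible endpoint, and as in Lemma 2.10 decreasing $\pp_v$ (for $u$ eligible, $v$ not) only raises $\ww_e^{\pp}$ toward $\delta_i$ and cannot overshoot because once it reaches $\delta_i$ the arc becomes forward-eligible and $v$ becomes eligible; symmetrically decreasing $\pp_u$ lowers $\ww_e^{\pp}$ toward $0$. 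I would also re-derive the key termination fact: after a blocking flow, $t$ is no longer reachable from $s$ in the eligible graph, since every augmenting path's arcs become ineligible — so each inner iteration drops $\pp_t$ by $\delta_i$, giving $O(D\ww_{\max}/2^i \cdot \delta_i^{-1}) = O(D/\epsilon)$ inner iterations per scale (using $\delta_i = \epsilon\ww_{\max}/2^i$), each costing $O(m)$ for the blocking flow, hence $O(Dm/\epsilon)$ per scale and $O(Dm\epsilon^{-1}\log(\ww_{\max}/\ww_{\min}))$ over all $T$ scales; the extra $\log n$ is the standard overhead for maintaining the eligible graph / blocking flow bookkeeping.

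The delicate part — and the main obstacle — is the transition between scales: the Dual Rescale step $\pp_u \leftarrow \pp_u + \delta_i l_u$ together with $\delta_{i+1} = \delta_i/2$. I need to check this restores the scale-$(i+1)$ invariants so the inner analysis can restart, and that it is consistent with the prescribed endpoint potentials ($\pp_t$ goes from $D\ww_{\max}/2^{i+1}$ up to $D\ww_{\max}/2^{i+1} + 2D\delta_{i+1}$, since $l_t = D$ and $\delta_i l_t = 2D\delta_{i+1}$). Rescaling changes $\ww_e^{\pp}$ by $\delta_i(l_u - l_v)$; for a ``true'' edge of the DAG $l_v \ge l_u + 1$ need not hold in general, but $l_v \le l_u + 1$ is false in the wrong direction — one has $l_v \ge l_u+1$ only along longest-path edges, so I must argue using $|l_u - l_v| \le D$ and the slack built into the $2D\delta_i$ buffer, following Duan–Pettie's scale-transition argument. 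This is exactly where the paper's remark about ``a small separation between edges eligible in the forward and backward directions'' is used: the gap of $\delta_i$ between the forward threshold $\delta_i$ and backward threshold $0$ must absorb the rescaling perturbation, and I expect to need $\delta_i \le \delta_{i+1} \cdot (\text{something})$ slack, i.e. the halving is precisely what makes this work. Once the invariants hold at the final scale $T$ with $\delta_T = \epsilon\ww_{\min}$ and $\pp_t = \pp_s = 0$, I apply the argument of Lemma 2.13 verbatim: invoke Lemma 2.5 to pass to reduced weights, split edges by whether they carry flow, use $\ww_e^{\pp}\ge 0$ on flow-carrying arcs and $\ww_e^{\pp} \le \delta_T \le \epsilon\ww_e$ on residual arcs, and sum against an arbitrary feasible $\xx^*$. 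Tracking the losses — the per-scale rounding of weights to multiples of $\delta_i$, the final $\delta_T \le \epsilon \ww_e$ slackness, and the buffer terms from the rescales — accumulates a constant number of $\epsilon$'s, which is why the bound is stated as $(1-8\epsilon)$ rather than $(1-\epsilon)$; rescaling $\epsilon$ by a constant then yields Theorem 1.2.
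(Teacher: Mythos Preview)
Your proposal has a genuine gap at exactly the point you flag as ``the delicate part.'' You propose to maintain the clean invariant $\ww_e^{\pp} \ge 0$ on flow-carrying edges and then apply Lemma~2.13 verbatim at the end. But this invariant does \emph{not} survive the dual rescale: the update $\pp_u \leftarrow \pp_u + \delta_i l_u$ decreases $\ww_e^{\pp}$ by $\delta_i(l_v - l_u) = \delta_i l_e$, and $l_e$ can be as large as $D$, not $1$. The $\delta_i$ gap between the forward and backward eligibility thresholds cannot absorb a perturbation of size $D\delta_i$. Worse, these perturbations accumulate across scales: an edge that received flow at an early scale $j$ and is never touched again has its reduced weight pushed down by roughly $l_e(\delta_j + \delta_{j+1} + \cdots) \approx 2 l_e \delta_j$, and $\delta_j$ can be $\epsilon \ww_{\max}$, enormous compared to $\delta_T = \epsilon \ww_{\min}$. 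So at termination you only know $\ww_e^{\pp} \ge -O(l_e \delta_e^f)$ where $\delta_e^f$ is the step size at the scale the edge last carried forward flow; this is the paper's invariant~\ref{rrc1a}, not $\ww_e^{\pp}\ge 0$.

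Consequently the final complementary-slackness argument cannot be Lemma~2.13 verbatim: you pick up an error term $\sum_e 3 l_e \delta_e^f \xx_e$ that must be separately bounded by $O(\epsilon)\sum_e \ww_e \xx_e$. This is the paper's main technical contribution for this section (Section~3.3): a path-decomposition argument over the \emph{entire history} of blocking flows. Each augmenting path $P$ pushed at scale $i$ has weight at least $D\delta_i/(2\epsilon)$ because $\pp_t \ge D\ww_{\max}/2^{i+1}$ at that moment (Lemma~\ref{lem:eligible}); one then maps each backward push on an edge to an earlier forward push on the same edge to replace every $\delta_e^f$ by the corresponding path's $\delta_P$, and charges $\sum_P D\delta_P \xx_P$ against $2\epsilon \sum_P (\text{weight of }P)$ (Lemma~\ref{lem:charge}). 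Your proposal contains no analogue of this charging scheme, and without it the accumulated rescaling error is uncontrolled. (Minor: the $\log n$ in the running time is the cost of a blocking flow with general capacities via dynamic trees, not eligible-graph bookkeeping.)
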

		
A main issue in setting up the invariants, as well as the overall proof,
is that errors from the iterations accumulate.
Here we need an additional definition about the last time forward
flow as pushed on an edge.
\begin{definition}
	\label{def:prev}
	During the course of the algorithm, for each edge $e = uv$, define $scale(e)$
	to be the the largest scale $j$ such that $j$ is at most $i$, the current scale, and there
	is a forward flow from $u$ to $v$ at scale $j$. Define $scale^f(e)$ to be the final value of $scale(e)$
	when the algorithm terminates.
\end{definition}
Note that $scale(e)$ resets to $i$ if $uv$ is used in 
the blocking flow generated on Line~\ref{ln:augmentation} at scale $i$. For simplicity, we use $\delta_{e}$ to denote $\delta_{ scale({e})}$, $\delta^f_{e}$ 
to denote $\delta_{ scale^f({e})}$, and $l_e$ to denote $l_v - l_u$.
The key eligibility rules and invariants are in Figure~\ref{fig:invElgScaling}.


\begin{figure}[ht]
	
	\begin{algbox}
		\begin{enumerate}
			\item Forward edge: if $\xx_{e} < \cc_{e}$, $u \rightarrow v$ is eligible if $\ww_{e}^{\pp} \geq \delta_{i}$.
			\item Backward edge: if $\xx_{e} > 0$, $v \rightarrow u$ is eligible if $ \ww_{e}^{\pp} \leq 0$.
		\end{enumerate}
	\end{algbox}
	
	\begin{algbox}
		\begin{enumerate}[ label = (B\arabic*),ref=(B\arabic*)]
			\item \label{rrc1a} If $\xx_{e} > 0$, then $\ww^{\pp}_{e}
				\geq - 3 l_e (\delta_{e} - \delta_{i}) - \delta_i$. 
			\item \label{rrc2a} If $\xx_{e} < \cc_{e}$, then
				$\ww^{\pp}_{e} \leq 2 \delta_{i}$.
		\end{enumerate}
	\end{algbox}
	
	\caption{Relaxed Invariant and Corresponding Eligibility Rules for Scaling Algorithm on Non-unit  Capacity Networks with Linear Weights.}
	\label{fig:invElgScaling}
\end{figure}

The intricacies in Condition~\ref{rrc1a} is closely related to the
changes in the dual rescaling step from Line~\ref{ln:dualRescale}.
In particular, Condition~\ref{rrc2a}, $\ww_{e}^{\pp} < 2 \delta_{i}$,
may be violated if $\delta_{i}$ is halved.
The dual rescaling on Line~\ref{ln:dualRescale} is in place precisely
to fix this issue by lowering $\ww_{e}^{\pp}$.
It on the other hand causes further problems for lower bounds
on $\ww_{e}^{\pp}$, leading to modifications to Condition~\ref{rrc1a} as well.
We first verify that these modified invariants are preserved by
the dual rescaling steps.

\subsection{Dual Rescaling Steps}
\label{subsec:dualRescale}

\begin{lemma}
The dual rescaling step on Line~\ref{ln:dualRescale} of~Figure~\ref{fig:invElgScaling}
maintain invariants~\ref{rrc1a}~and~\ref{rrc2a}.
\end{lemma}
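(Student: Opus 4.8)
The plan is to compute exactly how the dual rescaling step $\pp_u \leftarrow \pp_u + \delta_i l_u$ on Line~\ref{ln:dualRescale} changes the reduced weight of each tracked edge $e = uv$, and then verify the two invariants by direct algebra. First I would record the identity: since the step adds $\delta_i l_u$ to $\pp_u$ and $\delta_i l_v$ to $\pp_v$, and $l_e = l_v - l_u$, the reduced weight moves from $\ww_e^{\pp}$ to $\ww_e^{\pp} - \delta_i l_e$; that is, every reduced weight drops by exactly $\delta_i l_e$. Alongside this I would note the bookkeeping facts that make the step clean: the rescaling leaves $\xx$ untouched, so the hypotheses $\xx_e > 0$ and $\xx_e < \cc_e$ of the two invariants are preserved; $scale(e)$ is unchanged (it only resets when forward flow is pushed in an augmentation), so $\delta_e$ is the same value before and after; and the step size is halved immediately afterward, so the ``current'' $\delta$ appearing in the post-step invariants is $\delta_{i+1} = \delta_i/2$. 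The one structural fact everything rests on is that $l$ is the longest-path level function, so every original edge $e = uv$ satisfies $l_v \ge l_u + 1$, i.e. $l_e \ge 1$.

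With these in hand the two checks are short. For Invariant~\ref{rrc2a}: if $\ww_e^{\pp} \le 2\delta_i$ before the step, then afterward
\[
\ww_e^{\pp} - \delta_i l_e \;\le\; 2\delta_i - \delta_i l_e \;\le\; \delta_i \;=\; 2\delta_{i+1},
\]
the middle inequality being exactly $l_e \ge 1$. For Invariant~\ref{rrc1a}: if $\ww_e^{\pp} \ge -3 l_e(\delta_e - \delta_i) - \delta_i$ before, then afterward the reduced weight is at least $-3 l_e(\delta_e - \delta_i) - \delta_i - \delta_i l_e$, and substituting $\delta_{i+1} = \delta_i/2$ one checks
\[
\bigl(-3 l_e(\delta_e - \delta_i) - \delta_i - \delta_i l_e\bigr) - \bigl(-3 l_e(\delta_e - \delta_{i+1}) - \delta_{i+1}\bigr) \;=\; \tfrac12 \delta_i (l_e - 1) \;\ge\; 0,
\]
so the lower bound required at scale $i+1$ still holds. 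I would present both of these as the displayed chains above.

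The only delicate point — and the ``obstacle'', though a mild one — is getting the $\delta$-indices in the right places: the pre-step form of each invariant uses $\delta_i$ both in its explicit $\delta_i$ terms and implicitly inside $\delta_e$, whereas the post-step form must use $\delta_{i+1}$ in the explicit terms but keep $\delta_e$ frozen, since $scale(e)$ does not move during rescaling. Once that is set up, no case analysis on eligibility or reachability is needed; the two inequalities drop out of the single identity for the change in $\ww_e^{\pp}$ together with $l_e \ge 1$. It is worth pointing out that the generous slack term $-3 l_e(\delta_e - \delta_i)$ in Invariant~\ref{rrc1a} is exactly what lets the lower bound survive a drop of size $\delta_i l_e$, which can be considerably larger than the $\delta_i$ that a pure halving of the step size would call for; its shape is dictated precisely by this rescaling step.
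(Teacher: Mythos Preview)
Your proposal is correct and follows essentially the same approach as the paper: compute the exact change $\ww_e^{\pp} \mapsto \ww_e^{\pp} - \delta_i l_e$, then verify each invariant by direct algebra using $\delta_{i+1} = \delta_i/2$ and $l_e \ge 1$. The paper's write-up is terser (for \ref{rrc2a} it only notes $\ww_e^{\pp'} \le \ww_e^{\pp} - \delta_i$, which already suffices), but the content is identical, and your explicit difference computation $\tfrac12\delta_i(l_e-1)\ge 0$ for \ref{rrc1a} matches the paper's chain of equalities line for line.
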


\begin{proof}
	
	
We start with Condition~\ref{rrc2a}.
If $\ww_{e}^{\pp} \leq 2\delta_{i}$, then if we pick $\pp'$
such that  $\ww_{e}^{\pp'} \leq \ww_{e}^{\pp} - \delta_{i}$, we get:
\[
\ww_{e}^{\pp'} \leq \delta_{i} = 2 \delta_{i + 1}.
\]
This is done by updating vertex potential
$\pp_u \leftarrow \pp_u + \delta_{i} l_u$ for each vertex $u$ at the end of scale $i$.

It remains to check that the condition
\[
\ww^{\pp}_{e} \geq - 3 l_e (\delta_{e} - \delta_{i}) - \delta_i
\]
does not break by decreasing $\pp_u - \pp_v$ by $l_e\delta_{i}$:
\begin{align*}
& - 3 l_e (\delta_{e} - \delta_{i}) - \delta_i - l_e\delta_{i} \\ 
& = - 3l_e \delta_{e} + 2 l_e \delta_{i} - \delta_i \\
& = - 3l_e \delta_{e} + 4 l_e \delta_{i+1} - 2\delta_{i+1} \\
& \geq - 3l_e ( \delta_{e} - \delta_{i+1}) - \delta_{i+1}.
\end{align*} 
where the last line follows because $l_e \geq 1$.
\end{proof}

\subsection{Blocking Flows}
\label{subsec:blockingFlows}

Our algorithm also incorporates more general capacities $\cc_e$.
The changes caused by this in the linear programming formulations are:
\begin{enumerate}
	\item In the primal, the capacity constraints become $0 \leq \xx_e \leq \cc_e$, \item the dual objective becomes $\sum_{e} \cc_e \yy_e$.
\end{enumerate}
This extension can be incorporated naturally:
instead of finding a maximal set of edge-disjoint augmenting paths,
we find a blocking flow from $s$ to $t$.
Blocking flows can also be viewed as maximal sets of paths when the edges
are broken down into multi-edges of infinitesimal capacities.
They can be found in $O(m \log{n})$ time using dynamic trees~\cite{SleatorT83},
and therefore offer similar levels of theoretical algorithmic efficiency.

In the unit capacity case, an augmentation along any path in the residual graph destroys all eligible edges in the path, and the asymmetry of the eligibility
condition means the new reverse edges are also ineligible.
However, when we move to general capacities, an augmentation may
destroy just a single edge on the path.
As a result, it's simpler to work with the following equivalent definition
of blocking flows.
\begin{definition}
	\label{sec:blockingFlow}
	A \emph{blocking flow} on a graph $G$ is a flow such that every
	path from $s$ to $t$ goes through a fully saturated edge.
\end{definition}

We first verify that this definition also maintains all the invariants
both after augmentation and dual adjustments.
These proofs closely mirror those of Lemma~\ref{lem:simpleInvariants}.

\begin{lemma}
	\label{lem:augmentOkScaling}
	An augmentation by a blocking flow in the eligible graph,
	as on Line~\ref{ln:augmentation}, maintains the invariant.
\end{lemma}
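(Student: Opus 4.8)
The plan is to mimic the structure of the proof of Lemma~\ref{lem:simpleInvariants}, checking each invariant edge-by-edge, but now accounting for the fact that a blocking flow may change $\xx_e$ by a partial amount rather than toggling it between $0$ and a capacity. The key observation is that changes to $\xx_e$ during an augmentation only happen along eligible arcs, so I only need to analyze edges whose residual arcs were eligible at the start of the augmentation, and I must track how $scale(e)$ (hence $\delta_e$) is updated.

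First I would fix an edge $e = uv$ and split into cases according to which of its residual arcs carried flow in the blocking flow. \textbf{Case 1: forward flow was pushed on $e$.} Then $e$ was eligible in the forward direction, so $\ww_e^{\pp} \geq \delta_i$ held before augmentation (and potentials are unchanged by augmentation, so it still holds after). By Definition~\ref{def:prev}, pushing forward flow at the current scale $i$ resets $scale(e) \leftarrow i$, so afterwards $\delta_e = \delta_i$. Invariant~\ref{rrc2a}: we need $\ww_e^{\pp} \leq 2\delta_i$ only when $\xx_e < \cc_e$; but this bound did not change, and it held before (either it held as part of the invariant, or — if $e$ was fully saturated before — it is vacuous; I should double check that forward-eligibility plus the pre-augmentation invariant forces $\ww_e^{\pp} \le 2\delta_i$, which is exactly Condition~\ref{rrc2a} applied before the step, valid whenever $\xx_e<\cc_e$, and if $\xx_e=\cc_e$ already the blocking flow cannot push more forward flow through it). Invariant~\ref{rrc1a}: now $\xx_e > 0$ (we pushed forward flow), and with $\delta_e = \delta_i$ the required bound is $\ww_e^{\pp} \geq -\delta_i$, which is implied by $\ww_e^{\pp} \geq \delta_i > 0 > -\delta_i$.

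\textbf{Case 2: backward flow was pushed on $e$} (i.e.\ flow along the residual arc $v \to u$). Then the backward arc was eligible, so $\ww_e^{\pp} \leq 0$ before and after. Here $\xx_e$ strictly decreases, so $scale(e)$ is unchanged, hence $\delta_e$ is unchanged. Invariant~\ref{rrc2a}: after the step $\xx_e < \cc_e$ (it decreased), and we need $\ww_e^{\pp} \leq 2\delta_i$, which follows from $\ww_e^{\pp} \leq 0 \leq 2\delta_i$. Invariant~\ref{rrc1a}: we need it only if $\xx_e$ is still positive after the decrease; but the lower bound $-3 l_e(\delta_e - \delta_i) - \delta_i$ on $\ww_e^{\pp}$ is unchanged ($\delta_e$, $l_e$, $\delta_i$ all unchanged), and it held before, so it still holds. \textbf{Case 3: no flow on either residual arc of $e$.} Then $\xx_e$ and $scale(e)$ are both unchanged, all the quantities in~\ref{rrc1a} and~\ref{rrc2a} are unchanged, so the invariants carry over verbatim from the inductive hypothesis.

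The main obstacle I anticipate is Case~1, specifically making sure that resetting $scale(e)$ to $i$ does not \emph{weaken} invariant~\ref{rrc1a} for the edge in a way that propagates: since $\delta_e$ can only decrease when $scale(e)$ is set to the current (smallest-so-far) scale, the term $-3l_e(\delta_e - \delta_i)$ only moves toward $0$, so the lower bound becomes \emph{less} negative and is therefore easier to satisfy — I would state this monotonicity explicitly. A secondary subtlety is that a blocking flow can saturate an edge that previously had $\xx_e < \cc_e$; in that case invariant~\ref{rrc2a} becomes vacuous for that edge, so there is nothing to check. Conversely, invariant~\ref{rrc1a} only gets activated for edges that go from $\xx_e = 0$ to $\xx_e > 0$, and those are exactly the forward-eligible edges handled in Case~1.
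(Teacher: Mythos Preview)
Your proposal is correct and follows essentially the same approach as the paper: a case split on whether the blocking flow uses the forward or backward residual arc of $e$, invoking the eligibility condition to verify the relevant invariant. Your write-up is in fact more careful than the paper's own proof, which only checks the invariant that becomes newly relevant in each case (\ref{rrc1a} after forward flow, \ref{rrc2a} after backward flow) and leaves implicit the points you spell out about $scale(e)$ resetting to $i$ and about the other invariant either carrying over unchanged or becoming vacuous.
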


\begin{proof}
	The algorithm only sends flow through eligible edges. Consider an edge $uv$ on a blocking flow
	\begin{enumerate}
		\item If the augmentation sends flow from $u$ to $v$, then $uv$ is an eligible forward edge, and $\ww_{uv}^{\pp} \geq \delta_i$. After augmentation, condition \ref{rrc1a} must hold.
		\item  If the augmentation sends flow from $v$ to $u$, then $vu$ is an eligible backward edge, and $0 \geq \ww^{\pp}_{uv}$. After augmentation, condition \ref{rrc2a} must hold.
		
	\end{enumerate}
	
	Therefore, all edges on a blocking flow satisfy the set of reduced weight conditions after augmentation.
\end{proof}

\begin{lemma}
	\label{lem:dualAdjOkScaling}
	A dual adjustment step, as specified on Line~\ref{ln:dualAdjust},
	maintains the invariant.
\end{lemma}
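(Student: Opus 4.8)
The plan is to mirror the structure of the proof of Lemma~\ref{lem:simpleInvariants}, adapting it to the relaxed invariants~\ref{rrc1a}~and~\ref{rrc2a}. A dual adjustment step decreases $\pp_u$ by $\delta_i$ for every vertex $u$ that is unreachable from $s$ in the (recomputed) eligible graph. The only reduced weights $\ww_e^{\pp}$ this changes are those of edges $e = uv$ with exactly one endpoint eligible; edges with both endpoints eligible or both ineligible are unaffected, so I only need to check those two mixed cases. Throughout, I will use that the recomputation of the eligible graph happens \emph{before} the potential drop, so the eligibility status used to decide which potentials move is consistent with the reduced weights I am reasoning about.

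\textbf{Case 1: $u$ eligible, $v$ ineligible.} Here $\pp_v$ drops by $\delta_i$, so $\ww_e^{\pp} = \ww_e + \pp_u - \pp_v$ \emph{increases} by $\delta_i$. Condition~\ref{rrc1a} is a lower bound on $\ww_e^{\pp}$, so it is preserved automatically. For Condition~\ref{rrc2a}, I must argue that if $\xx_e < \cc_e$ then $\ww_e^{\pp}$ did not exceed $2\delta_i$ even after the increase. Before the step we had $\ww_e^{\pp} \le 2\delta_i$ by the inductive hypothesis; the point is that $\ww_e^{\pp}$ could only have been at most $\delta_i$ \emph{before} the step, because otherwise $uv$ would be eligible in the forward direction (the forward eligibility rule is $\ww_e^{\pp} \ge \delta_i$ when $\xx_e < \cc_e$), and then $v$, being reachable from $s$ through the eligible vertex $u$ via this eligible arc, would be eligible — contradicting that $v$ is ineligible. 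Hence before the step $\ww_e^{\pp} < \delta_i$, and after the step $\ww_e^{\pp} < 2\delta_i$, so~\ref{rrc2a} holds. (If instead $\xx_e = \cc_e$, then~\ref{rrc2a} is vacuous for $e$ and only~\ref{rrc1a} matters, which we already handled.)

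\textbf{Case 2: $v$ eligible, $u$ ineligible.} Here $\pp_u$ drops by $\delta_i$, so $\ww_e^{\pp}$ \emph{decreases} by $\delta_i$. Condition~\ref{rrc2a} is an upper bound, so it is preserved automatically. For Condition~\ref{rrc1a}, the concern is only when $\xx_e > 0$; I must show $\ww_e^{\pp}$ was large enough before the step that subtracting $\delta_i$ keeps it above $-3 l_e(\delta_e - \delta_i) - \delta_i$. As in Case 1, if before the step we had $\ww_e^{\pp} \le 0$ then the backward arc $vu$ would be eligible, and then $u$ would be reachable from $s$ (through the eligible vertex $v$ along this eligible backward arc), contradicting that $u$ is ineligible. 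So before the step $\ww_e^{\pp} > 0$, i.e. $\ww_e^{\pp} \ge \delta_i$ if we exploit that all reduced weights stay on the grid of multiples of $\delta_i$ — but even just $\ww_e^{\pp} > 0$ suffices here, giving after the step $\ww_e^{\pp} > -\delta_i \ge -3 l_e(\delta_e - \delta_i) - \delta_i$ since $\delta_e \ge \delta_i$ and $l_e \ge \cdots$ — actually $l_e$ can be negative for backward-heavy edges, so the cleanest argument uses $\ww_e^{\pp} > 0 \Rightarrow \ww_e^{\pp} - \delta_i > -\delta_i$, and then separately notes that the stated lower bound $-3 l_e(\delta_e - \delta_i) - \delta_i$ is at most $-\delta_i$ whenever $l_e(\delta_e - \delta_i) \ge 0$; the remaining subcase $l_e < 0$ I handle by observing $\delta_e = \delta_i$ in that situation (an edge carrying flow at the current scale with $l_e<1$ has $scale(e)=i$), so the bound is exactly $-\delta_i$ and we are done.

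\textbf{Main obstacle.} The delicate point is the last subcase of Case 2: unlike in the simple algorithm, Condition~\ref{rrc1a}'s lower bound depends on $\delta_e = \delta_{scale(e)}$ and on the sign of $l_e$, and a dual adjustment does not touch $scale(e)$, so I need the inequality $-3 l_e(\delta_e - \delta_i) - \delta_i - \delta_i \ge -3 l_e(\delta_e - \delta_i) - \delta_i$ — which is false as written — to instead be replaced by the correct accounting that the pre-step value was not merely $\ge$ the old bound but genuinely $> 0$ because $u$ was ineligible. Pinning down that "ineligibility of an endpoint forces a strict sign on $\ww_e^{\pp}$" is the crux, and it is exactly the analogue of the argument in Lemma~\ref{lem:simpleInvariants} that reduced weights can only move until an endpoint becomes eligible; I expect to spend most of the writeup making that step airtight in the presence of the two-sided slack $2\delta_i$ and the $l_e$-dependent bound.
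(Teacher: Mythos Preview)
Your overall approach is correct and is exactly the paper's: split on which endpoint is eligible, observe that an increase of $\ww_e^{\pp}$ can only help the lower bound~\ref{rrc1a} and a decrease can only help the upper bound~\ref{rrc2a}, and in the remaining subcases use that the relevant residual arc must be \emph{ineligible} (otherwise the ``ineligible'' endpoint would in fact be reachable), which pins down the sign of $\ww_e^{\pp}$ before the step.

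The one place you go astray is your worry that $l_e$ might be negative. It cannot: $e = uv$ always denotes an edge of the original DAG, and $l_u$ is the length of a longest path from $s$ to $u$, so any arc $u \to v$ forces $l_v \ge l_u + 1$, i.e.\ $l_e = l_v - l_u \ge 1$. Backward residual arcs are not new edges with their own $l$; the invariant~\ref{rrc1a} is stated for the original edge $e$, with $l_e$ and $\ww_e^{\pp}$ always taken in the forward sense. With $l_e \ge 1$ and $\delta_e \ge \delta_i$ (scales only advance, so $scale(e) \le i$), the lower bound satisfies
\[
-3 l_e(\delta_e - \delta_i) - \delta_i \;\le\; -\delta_i,
\]
and your Case~2 argument ($\ww_e^{\pp} > 0$ before, hence $\ww_e^{\pp} > -\delta_i$ after) finishes immediately. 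The ``remaining subcase $l_e < 0$'' and the claim that it would force $scale(e) = i$ are both unnecessary and should be dropped; once that detour is removed, your proof is essentially identical to the paper's.
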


\begin{proof}
	Notice that a dual adjustment can only change the effective reduce weights of edges that have exactly one eligible endpoint. For an edge $e = uv$ in the graph, we consider 2 cases:
	\begin{itemize}
		\item $u$ is eligible, and $v$ is ineligible: a dual adjustment step decreases $\pp_v$, and thus increases $\ww^{\pp}_{e}$. 
		\begin{itemize}
			\item if $\xx_{e} = \cc_{e}$ then condition \ref{rrc1a} still holds because $\ww^{\pp}_{e}$ increases.
			\item if $\xx_{e} <  \cc_{e}$ then we know that $uv$ is in the residual graph and ineligible. Since $uv$ is ineligible, $\ww^{\pp}_{e}$ must be less than $\delta_i$. Condition \ref{rrc2a} still holds because $\ww^{\pp}_{e}$ can only increase until it is at least $\delta_i$ and $e$ becomes eligible. 
		\end{itemize}
		\item $v$ is eligible, and $u$ is ineligible:  a dual adjustment step decreases $\pp_u$, and thus decreases $\ww^{\pp}_{e}$. 
		\begin{itemize}
			\item if $\xx_{e} > 0$ then we know that $vu$ is in the residual graph and ineligible. Since $vu$ is ineligible, $\ww^{\pp}_{e}$ must be greater than $0$. Condition \ref{rrc1a} still holds because $\ww^{\pp}_{e}$ can only decrease until it is at most $0$ and $vu$ becomes eligible. 
			\item if $\xx_{e} = 0$ then condition \ref{rrc2a} still holds because $\ww^{\pp}_{e}$ decreases.
		\end{itemize}
	\end{itemize}
\end{proof}

Note that at the beginning of the algorithm, $\xx_{e} = 0$ and $\ww_{e}^{\pp} = \ww_e - l_e \ww_{\max}\leq 0$ for each edge $e$. Therefore, the invariants \ref{rrc2a} holds.
Induction then gives that these invariants hold throughout the course of the  algorithm.


\subsection{Bounding Weights of Paths}
\label{subsec:pathMap}

We can now turn our attention to the approximation factor.
The main difficulty in this proof is due to cancellations of flows
pushed at  previous scales.
Our proof relies on performing a \emph{path decomposition} 
on each blocking flow in the history of the final solution $\xx$.
We will view the final flow, $\xx$, as a sum of flows from
the collection $\mathcal{P}$.
\[
\xx = \sum_{P \in \mathcal{P}} \xx_{P}.
\]
Note here that $\xx_{P}$ is a quantity that specifies
the amount of flow on that path.
We also define $F_P$ to be the set of edges going forward on $P$,
and $B_P$ to be the set of edges going backward on $P$.
We use this decomposition to map backward flow to
forward flow in an earlier scale.
We will also use $scale({P})$ to denote the scale
at which $P$ was pushed, and $\delta_P$ to denote $\delta_{scale(P)}$.

Note that we will likely push flows on an edge in forward and backward directions
many times throughout the course of the algorithm.
Each of these instances belong to a different path, and is taken
care of separately in the summation.
We first establish a lower bound on the weight of each path
using the eligibility criteria: we show that the weight of the
path routed at phase $i$ can be lower bounded by terms
involving $\delta_i$, the step size at that phase.

\begin{lemma} \label{lem:eligible}
\[ \sum_{e \in F_P} \ww_e - \sum_{e \in B_P} \ww_e
	\geq \frac{ D \delta_{P}}{2\epsilon }\]
\end{lemma}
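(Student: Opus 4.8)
The plan is to bound the weight of a path $P$ pushed at scale $i = scale(P)$ from below by summing the reduced-weight inequalities along its edges and then observing that these telescope. Concretely, for each forward edge $e \in F_P$ the eligibility rule at scale $i$ guarantees $\ww_e^{\pp} \geq \delta_i = \delta_P$, and for each backward edge $e \in B_P$ the eligibility rule guarantees $\ww_e^{\pp} \leq 0$. Writing $\ww_e^{\pp} = \ww_e + \pp_u - \pp_v$ and summing $\ww_e^\pp$ over forward edges minus $\ww_e^\pp$ over backward edges along $P$, the potential terms telescope along the path from $s$ to $t$ (using the dummy arc $ts$ and $\pp_s = \pp_t$), so the potentials cancel and we are left with $\sum_{e \in F_P} \ww_e - \sum_{e \in B_P} \ww_e = \sum_{e \in F_P} \ww_e^\pp - \sum_{e \in B_P} \ww_e^\pp \geq |F_P| \cdot \delta_P$.

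First I would make the telescoping precise: since $P$ is a path from $s$ to $t$ in the (residual) eligible graph, its forward edges traversed forward and backward edges traversed backward form a walk, so $\sum_{e \in F_P}(\pp_u - \pp_v) - \sum_{e \in B_P}(\pp_u - \pp_v) = \pp_s - \pp_t = 0$. This is exactly the path-level analogue of Lemma~\ref{lem:reducedCost}. Next I would lower bound $|F_P|$. The path $P$ goes from $s$ to $t$, and since the network has depth $D$, I want to argue that $P$ must use at least roughly $D$ forward edges net — but a path in the residual graph can meander. The cleaner route is: each forward edge increases the level $l$ by $l_e \geq 1$ and each backward edge decreases it by at most the level drop, and the total level change from $s$ to $t$ is $D$ (after the level computation $l_u$ = longest path length, with $l_t = D$). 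More carefully, $\sum_{e \in F_P} l_e - \sum_{e \in B_P} l_e = l_t - l_s = D$, and since $l_e \geq 1$ on forward edges, this forces $|F_P| \geq D$. Combining with the previous paragraph, $\sum_{e \in F_P}\ww_e - \sum_{e \in B_P}\ww_e \geq D \delta_P$.

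The gap to the claimed bound is the extra factor $\tfrac{1}{2\epsilon}$, so the remaining (and main) obstacle is to recover it. I expect this comes from the eligibility threshold being measured against $\delta_i$ while $\delta_i$ itself is of the form $\epsilon \ww$-scale quantities, and more importantly from the starting value of $\pp_t$ at scale $i$, which is $D\ww_{\max}/2^i + 2D\delta_i$ while $\delta_i = \epsilon\ww_{\max}/2^i$, giving $\pp_t \approx \delta_i/\epsilon \cdot (\text{something})$. So rather than using only the crude $|F_P| \geq D$ bound, I would instead track the actual value of $\pp_t$ at the moment $P$ is pushed: the telescoping identity gives $\sum_{e \in F_P}\ww_e - \sum_{e \in B_P}\ww_e = \sum \ww_e^\pp - \sum \ww_e^\pp + (\pp_t - \pp_s)$ if we do not assume $\pp_s = \pp_t$ mid-algorithm, and since the eligible-graph contributions are nonnegative while $\pp_s = 0$ throughout, this is at least $\pp_t$ evaluated at the push time. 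Since $\pp_t \geq D\ww_{\max}/2^{i+1} = D\delta_i/(2\epsilon) = D\delta_P/(2\epsilon)$ throughout scale $i$, we get exactly $\sum_{e \in F_P}\ww_e - \sum_{e \in B_P}\ww_e \geq D\delta_P/(2\epsilon)$. The delicate point to get right will be reconciling the residual-graph reduced weights (which for backward edges are $-\ww_e^\pp \geq 0$ by eligibility, fine) with the fact that some edges along $P$ may currently have $\xx_e$ strictly between $0$ and $\cc_e$, but the eligibility conditions in Figure~\ref{fig:invElgScaling} handle exactly these cases, so the per-edge bounds $\ww_e^\pp \geq \delta_i$ (forward) and $\ww_e^\pp \leq 0$ (backward) hold verbatim and the argument goes through.
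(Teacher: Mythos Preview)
Your proposal is correct and, once you abandon the first detour, lands on exactly the paper's argument: telescope using the potentials $\pp$ at the moment $P$ was pushed to get $\sum_{F_P}\ww_e - \sum_{B_P}\ww_e = \sum_{F_P}\ww_e^{\pp} - \sum_{B_P}\ww_e^{\pp} + (\pp_t-\pp_s)$, apply the eligibility rules to make the first two sums nonnegative, and use $\pp_s=0$ together with the loop invariant $\pp_t \ge D\ww_{\max}/2^{i+1} = D\delta_P/(2\epsilon)$. The initial attempt in your first two paragraphs (invoking $\pp_s=\pp_t$ mid-algorithm and bounding via $|F_P|\ge D$) was indeed a dead end---$\pp_s=\pp_t$ holds only at termination and yields the too-weak bound $D\delta_P$---but you correctly diagnosed and repaired this in the third paragraph.
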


\begin{proof}
	Let $\pp$ be the vertex potentials from when flow was pushed
	along $P$. Note that $\pp$ may not be the final potentials. By the definition of 
	$\ww_e^{\pp}$, we have
	\[
		 \sum_{e \in F_P} \ww_e - \sum_{e \in B_P} \ww_e
		=  \sum_{e \in F_P} \ww^{\pp}_e - \sum_{e \in B_P} \ww^{\pp}_e+ \left( \pp_t - \pp_s \right).
	\]
	
	For the edges on the path, the eligibility rules give
	\begin{itemize}
		\item  $\ww^{\pp}_e \geq \delta_P \geq 0$ if $e \in F_P$, and
		\item  $-\ww^{\pp}_e \geq 0$ if $e \in B_P$.
	\end{itemize}
	Furthermore, our outer loop condition on Line~\ref{ln:outerLoop}
	ensures that at scale $i$, $\pp_t $ is at least
	$D\ww_{max}/ 2^{i+1} = D \delta_{i} / 2 \epsilon$, while
	$\pp_s$ is always $0$.
	Therefore the total is at least $ D \delta_{i} / 2 \epsilon$.
%
\end{proof}

This allows us to map the final inaccuracies of weights
caused by the Condition~\ref{rrc2a} of the invariants
to $\epsilon$ of the total weights.
The following proof hinges on path decompositions

\begin{lemma} \label{lem:charge}
Let $\xx$ be the final answer.
Then we have
\begin{align*}  
\sum_{e} 3l_e\delta^f_{e} \xx_{e}
\leq 6 \epsilon \sum_{e} \ww_{e} \xx_{e}.
\end{align*}
\end{lemma}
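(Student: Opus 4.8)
The target inequality $\sum_e 3 l_e \delta^f_e \xx_e \le 6\epsilon \sum_e \ww_e \xx_e$ should be proved by charging, for each edge $e$ carrying final flow, its "error budget" $3 l_e \delta^f_e \xx_e$ to the weight of the path(s) that most recently pushed forward flow through $e$. The key point is that $\delta^f_e = \delta_{scale^f(e)}$, where $scale^f(e)$ is exactly the last scale at which a forward-flow path used $e$; so if $P$ is a path in the decomposition $\mathcal{P}$ that pushed forward flow on $e$ at that final scale, then $\delta_P = \delta^f_e$ for that edge. I will use the path decomposition $\xx = \sum_{P \in \mathcal{P}} \xx_P$ introduced in Section~\ref{subsec:pathMap}, restricted to the forward edges $F_P$ of each path.

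**Key steps, in order.** First, rewrite the left-hand side using the path decomposition: $\sum_e 3 l_e \delta^f_e \xx_e = \sum_e 3 l_e \delta^f_e \sum_{P : e \in F_P \text{ at final scale}} \xx_P(e)$. More carefully, I would argue that the net final flow $\xx_e > 0$ on an edge can be attributed to surviving forward-flow paths, all of which were pushed at scales $\le scale^f(e)$, and that it suffices to bound the contribution of those pushed at exactly $scale^f(e)$ together with the telescoping contributions of earlier cancelled ones — but the cleanest route is: for each path $P \in \mathcal{P}$ contributing to the final flow, and each $e \in F_P$, we have $\delta_P \ge \delta^f_{e'}$ is false in general, so instead I charge each unit of $\xx_e$ to the path that put it there, whose scale is $\ge scale^f(e)$ means $\delta_P \le \delta^f_e$... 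Let me restate: every surviving forward-flow on $e$ was deposited at some scale $j$ with $\delta_j \le \delta_{scale^f(e)} = \delta^f_e$ only if $j \ge scale^f(e)$; but $scale^f(e)$ is the *last* such scale, so any surviving deposit is at scale $\le scale^f(e)$, hence $\delta_P \ge \delta^f_e$. Thus for the decomposing path $P$ with $e \in F_P$, $\delta_P \ge \delta^f_e$, giving $3 l_e \delta^f_e \xx_P(e) \le 3 l_e \delta_P \xx_P(e)$. Summing over $e \in F_P$ and using $\sum_{e \in F_P} l_e \le D$ (since the net level gain along a path from $s$ to $t$ telescopes and $l_s = 0$, $l_t \le D$, and backward edges only decrease the level):
\[
\sum_{e \in F_P} 3 l_e \delta^f_e \,\xx_P \le 3 D \delta_P \,\xx_P.
\]
Second, apply Lemma~\ref{lem:eligible}, which gives $\sum_{e \in F_P}\ww_e - \sum_{e \in B_P}\ww_e \ge D\delta_P / (2\epsilon)$, hence $3 D \delta_P \le 6\epsilon\bigl(\sum_{e \in F_P}\ww_e - \sum_{e \in B_P}\ww_e\bigr)$, so
\[
\sum_{e \in F_P} 3 l_e \delta^f_e \,\xx_P \le 6\epsilon\Bigl(\sum_{e \in F_P}\ww_e - \sum_{e \in B_P}\ww_e\Bigr)\xx_P.
\]
Third, sum over all $P \in \mathcal{P}$. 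On the left, $\sum_{P}\sum_{e \in F_P} 3 l_e \delta^f_e \xx_P \ge \sum_e 3 l_e \delta^f_e \xx_e$, because summing the signed flows of all paths through $e$ (forward minus backward) gives exactly $\xx_e$, and dropping the backward occurrences only decreases a nonnegative quantity. On the right, $\sum_P \bigl(\sum_{e \in F_P}\ww_e - \sum_{e \in B_P}\ww_e\bigr)\xx_P = \sum_e \ww_e \xx_e$ by the same path-decomposition identity (forward contributions minus backward contributions net to $\xx_e$). Combining yields the claim.

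**Main obstacle.** The delicate part is the bookkeeping around $scale^f(e)$ and which paths in $\mathcal{P}$ survive to the final flow: I must ensure that the decomposition $\mathcal{P}$ can be taken so that every path contributing positive net flow to $\xx_e$ on a *forward* edge was pushed at a scale $\le scale^f(e)$ (so $\delta_P \ge \delta^f_e$), and that cancellations between forward and backward pushes at intermediate scales are handled correctly — i.e. that the left side really is dominated by $\sum_P \sum_{e\in F_P} 3l_e\delta^f_e \xx_P$ rather than by a smaller signed sum. This is where the "analyzing the entire history of paths going through an edge" philosophy from Section~\ref{subsec:pathMap} does the work, and likely relies on the mapping of backward flow to earlier forward flow alluded to there; I expect the actual proof to invoke that mapping to control the sign issues, and the rest (the $\sum l_e \le D$ bound and the invocation of Lemma~\ref{lem:eligible}) to be routine.
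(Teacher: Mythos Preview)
Your plan has the right skeleton (path decomposition, apply Lemma~\ref{lem:eligible} per path, sum), but there is a genuine gap at the step where you bound $\sum_{e\in F_P} l_e$.

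You claim $\sum_{e\in F_P} l_e \le D$ ``since the net level gain along a path from $s$ to $t$ telescopes and backward edges only decrease the level.'' The telescoping actually says
\[
\sum_{e\in F_P} l_e \;-\; \sum_{e\in B_P} l_e \;=\; l_t - l_s \;=\; D,
\]
so $\sum_{e\in F_P} l_e = D + \sum_{e\in B_P} l_e \ge D$, with strict inequality whenever $P$ uses any backward arc. Your inequality is in the wrong direction. Because you already dropped the backward contributions in your ``Third'' step (which was valid there, since it only enlarges the left-hand side), you can no longer recover the needed bound $\sum_{e\in F_P} 3 l_e \delta_P \le 3 D \delta_P$, and the chain of inequalities breaks.

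The paper's proof avoids this by \emph{not} dropping the backward terms. It keeps the signed expression
\[
\sum_{P: e\in F_P}\delta^f_e\,\xx_P \;-\; \sum_{P: e\in B_P}\delta^f_e\,\xx_P
\]
for each edge $e$ and argues it can be replaced by $\sum_{P: e\in F_P}\delta_P\,\xx_P - \sum_{P: e\in B_P}\delta_P\,\xx_P$ without decreasing. This is exactly the ``mapping of backward flow to earlier forward flow'' you allude to in your obstacle paragraph: each backward push $P_i$ on $e$ is paired with an earlier forward push $P_j$ of equal amount; the paired contribution goes from $0$ to $(\delta_{P_j}-\delta_{P_i})\xx_{P_j}\ge 0$, while each unmatched forward push has $\delta_P \ge \delta^f_e$. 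After this replacement one swaps the order of summation and uses the \emph{exact} telescoping $\sum_{e\in F_P} l_e - \sum_{e\in B_P} l_e = D$, which is what makes Lemma~\ref{lem:eligible} applicable. So the missing idea in your proposal is precisely that matching argument; without it, discarding the backward terms early is fatal.
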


\begin{proof} 
Let $\xx_P$ be the amount of flow sent through $P.$ The definition of $\xx$ and summation of flows gives
\[
\xx_{e} = \sum_{P:e \in F_P}  \xx_{P} -  \sum_{P:e \in B_P} \xx_{P} .
\]
reversing the ordering of summation, and aggregating per edge gives:
\[
\sum_{e} 3l_e\delta^f_{e} \xx_{e} = \sum_{e} 3 l_e\delta^f_{e}
	\left(\sum_{P: e \in F_P}  \xx_{P}
		-  \sum_{P: e \in B_P} \xx_{P}\right).
\]

For each edge $e$, 
\begin{align}
 \delta^f_{e} \left(\sum_{P: e \in F_P}  \xx_{P} -  \sum_{P: e \in B_P} \xx_{P}\right) 
 = \sum_{P: e \in F_P} \delta^f_{e} \xx_{P}
	 -  \sum_{P: e \in B_P} \delta^f_{e} \xx_{P}  \label{eq}
\end{align}
Consider the set of paths $P_1, \ldots P_k$ sending flow on edge $e$ (in both directions) in the order of index $i$ for $i = 1 \ldots k$. Whenever a flow is sent from $v$ to $u$ (a backward direction), there must be some existing flow from $u$ to $v$. By breaking a path into multiple paths carrying smaller amounts of flow, we may assume that for each $P_i$ such that $e \in B_{P_i}$, there exists $P_j$ such that $j < i$, $e \in F_{P_j}$ and $\xx_{P_i} = \xx_{P_j}$. We define a function that maps $i$ to $j$. Since $j<i$, we must have $\delta_{P_j} \geq \delta_{P_i}$. It follows that 
\[ \delta^f_{e} \xx_{P_j} - \delta^f_{e} \xx_{P_i} = 0 \leq \delta_{P_j} \xx_{P_j} - \delta_{P_i} \xx_{P_i}  \]
We apply the following modifications to the RHS of equation (\ref{eq}). For each pair $(i,j)$ such that $i$ is mapped to $j$, we increase the multiplication factor of $\xx_{P_i}$ from $\delta^f_{e}$ to $\delta_{P_i}$ and the multiplication factor of $\xx_{P_j}$ from $\delta^f_{e}$ to $\delta_{P_j}$. Finally, for the remaining forward flows that are not canceled, we increase $\delta^f_{e}$ to the value $\delta_P$ of the path containing them. Such modifications can only increase the value of the RHS of (\ref{eq}). Therefore, for each edge $e$,

\[\sum_{P: e \in F_P} \delta^f_{e} \xx_{P} -  \sum_{P: e \in B_P} \delta^f_{e} \xx_{P} \leq \sum_{P: e \in F_P} \delta_{P} \xx_{P} -  \sum_{P: e \in B_P} \delta_{P} \xx_{P}. \]

Plugging back to the summation, we have 

\[\sum_{e} 3l_e\delta^f_{e} \xx_{e} \leq \sum_{e} 3 l_e
	\left(  \sum_{P: e \in F_P} \delta_{P} \xx_{P} -  \sum_{P: e \in B_P} \delta_{P} \xx_{P} \right).\]
Changing the order of summation and notice that for each path $P$
\[  \sum_{e \in F_P} l_e\delta_{P} -  \sum_{e \in B_P} l_e \delta_{P} = D\delta_P,\]
which by Lemma~\ref{lem:eligible} is bounded by 
$2 \epsilon \left( \sum_{e \in F_P} \ww_{e} -  \sum_{e \in B_P} \ww_e \right) $.
Summing over all paths $P \in \mathcal{P}$ then gives the result.
\end{proof}

\subsection{Putting Things Together}
\label{subsec:finish}

It remains to put these together through the use
of reduced costs in a way analogous to
Lemma~\ref{lem:relaxedComplementarity}.

\begin{proof}(of Theorem~\ref{thm:factorLinear})
	The number of iterations in each scale is ${D}/ {2\epsilon} + 2D = O(D / {\epsilon})$. The number of scales is $\log ( {\ww_{max}} / {\ww_{min}} )$. Each iteration takes $O(m \log n)$ to find a blocking flow. Therefore, the total running time is $O \left( ({Dm \log n} / {\epsilon} ) \log ( {\ww_{max}} / {\ww_{min}}) \right)$.
	
	For the approximation guarantees, consider any optimal flow $\xx^*$.
	Let $\pp$ denote the final potentials.
	Once again, Lemma~\ref{lem:reducedCost} gives
	$\sum_{e} \ww_{e}^{\pp} \xx_e = \sum_{e} \ww_e \xx_e$
	and $\sum_{e} \ww_{e}^{\pp} \xx^*_e = \sum_{e} \ww_e \xx^*_e$.
	
	We will incorporate the invariants from Figure~\ref{fig:invElgScaling}
	through two cases:
		\begin{enumerate}
			\item If $\xx_e < \xx^*_e$, then $\xx_e < \cc_e$, and
			the invariant gives $\ww_e^{\pp} \leq 2 \delta_{T}$, which implies
			\[
			\ww_e^{\pp}\xx_e  \geq \left( \ww_e^{\pp} - 2 \delta_{T} \right) \xx_e  \geq  \left( \ww_e^{\pp} - 2 \delta_{T} \right) \xx^*_e.
			\]
			Since $\delta_{T} = \epsilon \ww_{\min} \leq \ww_e$, this implies
			\[
			\xx_e \ww_e^{\pp} \geq \xx^*_e \ww_e^{\pp} - 2\epsilon \ww_{e} \xx^*_e .
			\]
			\item If $\xx_e > \xx^*_e$, then $\xx_e > 0$, and the invariant gives
			\[
				\ww_e^{\pp} \geq -3 l_e \delta_e^f
			\]
			which combined with $\xx^*_e \leq \xx_e$ gives:
			\[
			\xx_e \left( \ww_e^{\pp} + 3 l_e \delta_e^f \right) \geq \xx^*_e \left( \ww_e^{\pp} + 3 l_e \delta_e^f \right) \geq \xx^*_e \ww_e^{\pp}
			\]
		\end{enumerate}
		Combining these gives
		\begin{align*}
			\sum_{e} \xx_e \left( \ww_e + 3 l_e\delta_e^f  \right) & = \sum_{e} \xx_e \left( \ww_e^{\pp} + 3 l_e\delta_e^f  	\right)  \\
		 & \geq \sum_e \xx^*_e \ww_e^{\pp} - 2 \epsilon \sum_e \ww_e  \xx^*_e	 \\
		 & = (1 - 2 \epsilon) \sum_e \ww_e  \xx^*_e
		\end{align*}
		and the result then follows from Lemma~\ref{lem:charge}.
		
\end{proof}

	\section{Non-Linear Networks}
\label{sec:nonlinear}

We now move to the non-linear setting with
arbitrary capacities.
The primal objective becomes:
\begin{align*}
\max & \sum_{e} f_e(\xx_e)
\end{align*}
Here we will let $\ww_e(\xx_e)$ denote the weight of $e$
when $\xx_e$ units are on it, i.e.,
\[
\ww_{e}(\xx_e) = \frac{d}{d \xx_e} f_{e}(\xx_{e}).
\]
Since $f_{e}(\cdot)$ is concave, we know that
$\ww_{e}(\xx_e)$ is monotonically decreasing.
Pseudocode of our algorithm is in Figure~\ref{alg:main},
and its  eligibility rules are in Figure \ref{fig:eligScalingNonlinear}. 
Note that the eligibility rules in the pseudocode is a shorter version of the one in Figure \ref{fig:eligScalingNonlinear}.

\begin{figure}[ht]
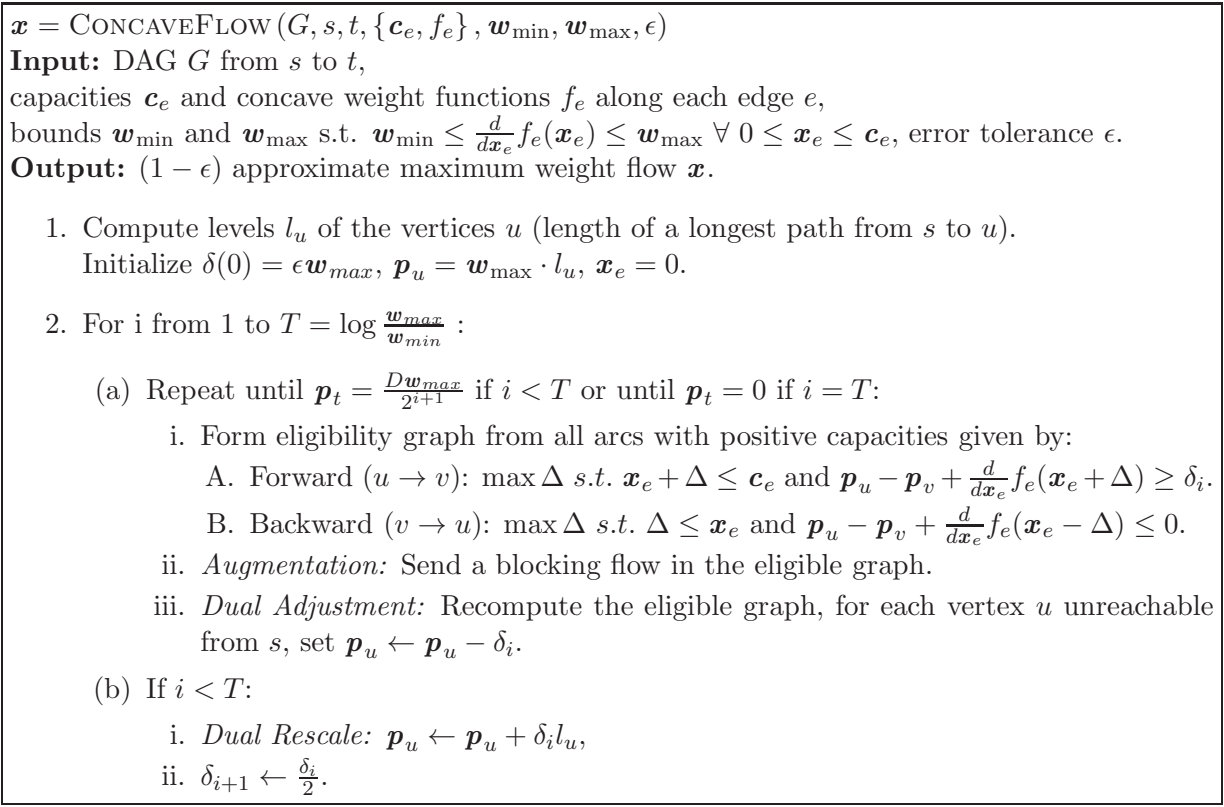

	
	\begin{algbox}
		$\xx = \textsc{ConcaveFlow}\left(G, s, t, \left\{ \cc_e, f_e \right\}, \ww_{\min}, \ww_{\max}, \epsilon \right)$
	
		\textbf{Input:} DAG $G$ from $s$ to $t$,\\
		\qquad capacities $\cc_e$ and concave weight functions $f_e$ along
		each edge $e$,\\
		\qquad bounds $\ww_{\min}$ and $\ww_{\max}$ s.t.
		$\ww_{\min} \leq \frac{d}{d \xx_e}f_e(\xx_e) \leq \ww_{\max}~
			\forall~0 \leq \xx_e \leq \cc_e$,
			 error tolerance $\epsilon$.
		
		\textbf{Output:} $(1 - \epsilon)$ approximate maximum weight flow $\xx$.
		\begin{enumerate}
			\item Compute levels $l_u$ of the vertices $u$ (length of a longest path from $s$ to $u$). \\
				Initialize $\delta(0) = \epsilon \ww_{max}$,  $\pp_u = \ww_{\max} \cdot l_u$, $\xx_e = 0$.
			
			\item For i from 1 to $T = \log \frac{\ww_{max}}{\ww_{min}}$ :
			\begin{enumerate}
				\item Repeat until $\pp_t = \frac{D \ww_{max}}{2^{i+1}}$ if $i<T$ or until $\pp_t = 0$ if $i = T$:	\label{ln:outerLoop}
				\begin{enumerate}
					\item Form eligibility graph from all arcs with positive capacities given by:
					\begin{enumerate}
					\item Forward ($u \rightarrow v$): $\max \Delta~s.t.~\xx_e + \Delta \leq \cc_e
					~\text{and}~
					\pp_u - \pp_v + \frac{d}{d \xx_e} f_e(\xx_{e} + \Delta) \geq \delta_{i}$.\label{item:forward}
					\item Backward ($v \rightarrow u$): $\max \Delta~s.t.~\Delta \leq \xx_e
					~\text{and}~
					\pp_u - \pp_v + \frac{d}{d \xx_e} f_e(\xx_{e} - \Delta) \leq 0$.\label{item:backward}
					
					\end{enumerate}
					\item \emph{Augmentation:}
					Send a blocking flow in the eligible graph.\label{ln:augmentation}
					\item \emph{Dual Adjustment:}  Recompute the eligible graph,
				for each vertex $u$ unreachable from $s$,
				set $\pp_{u} \leftarrow \pp_{u} - \delta_i$. \label{ln:dualAdjust}
				\end{enumerate}
				\item If $i<T$:
				\begin{enumerate}
					\item \emph{Dual Rescale:} $\pp_u \leftarrow \pp_u + \delta_{i} l_u$, \label{ln:dualAdjust} \label{ln:dualRescale}
					\item $\delta_{i+1} \leftarrow \frac{\delta_{i}}{2}.$
				\end{enumerate}
			\end{enumerate}
		\end{enumerate}
	\end{algbox}
	
	\caption{Algorithm for Concave Networks}
	
	\label{alg:main}
	
\end{figure}

We remark that the cost of finding $L_e$ and $U_e$,
the maximum amount the flow can increase/decrease before
the gradient of $f_e(\xx_e)$ changes by $\Delta$,
depends on the structures of $f_e(\cdot)$.
In case of simple functions such as quadratics, they can
be calculated in closed form.
For more complex functions, finding these points can be viewed
as analogs of line searches~\cite{BoydV04:book}.

\begin{figure}[ht]

\begin{algbox}
\begin{enumerate}
\item Forward edge: if $\xx_{e} < \cc_{e}$,
$uv$ is eligible if $\delta_{i} < \ww_{e}^{\pp}(\xx_{e})$, and can has forward
capacity
\[
U_{e}(\xx_e, i) :=
 \min\left\{ \cc_{e} - \xx_{e},  \Delta  \text{ such that }\, \ww^{\pp}_e(\xx_{e} + \Delta) = \delta_{i} \right\}.
\]
\item Backward edge: if $\xx_{e} > 0$, $vu$ is eligible if $0 > \ww_{e}^{\pp}(\xx_{e})$, with capacity
\[
  L_e(\xx_e,i) :=
 \min \left \{ \xx_{e}, \Delta \text{ such that } \, \ww^{\pp}_e(\xx_{e} - \Delta) = 0 \right\}.
 \]
\end{enumerate}

\end{algbox}

\caption{Eligiblity Rules for Scaling Algorithm for Concave Weight Functions.}
\label{fig:eligScalingNonlinear}
\end{figure}

Conceptually, we divide each edge $uv$ into multiple imaginary parallel edges between $u$ and $v$, each with a linear weight function.
This can be viewed as approximating $\ww_e(\xx_e)$ by a decreasing step function. To be precise, for each edge $e$, we break it into $k$ parallel edges $e_1,\ldots , e_k$ such that:
\begin{enumerate}
	\item $\ww_{e_1} = \lfloor \ww_e(0) / \delta_{T} \rfloor \delta_{T}$, $ \ww_{e_k} = \lfloor \ww_e(\cc_e) / \delta_{T} \rfloor \delta_{T}$ and $\ww_{e_{i}} = \ww_{e_{i-1}} - \delta_{T}$ for all $1<i<k$.
	\item $\cc_{e_i} = \xx_{i} - \xx_{i-1}$ where $\xx_i$s are such that  $\xx_{0} = 0$,  $\xx_{k} = \cc_e$ and $\ww_e(\xx_i) = \ww_{e_i}$ for all $0<i<k$.
\end{enumerate}
Let $G = (V,E)$ be the original network, and $\overline{G}= (V,\overline{E})$ be the network with multiedges. We say that a flow $\overline{\xx}$ in $\overline{G}$ is \emph{well-ordered} if the following condition holds for every $e$:
\[ \overline{\xx}_{e_i} > 0 \Rightarrow \overline{\xx}_{e_j} = \cc_{e_j} \quad \forall j < i.\]
A flow $\overline{\xx}$ in $\overline{G}$ is \emph{equivalent} with a flow $\xx$ in $G$ if:
\[ \sum_j \xxbar_{e_j} = \xx_e. \]
Notice that the well-ordered property
ensures that there is a unique reverse mapping from $\xx_e$ to 
$\xx_{e_1} \ldots \xx_{e_k}$ as well.

\begin{lemma} \label{lem:mapping}
Let $(\xx,\pp)$ be a flow-potential pair at some point in the course of \textsc{ConcaveFlow} on input $G$. \textsc{ScalingFlow} on input $\overline{G}$ can maintain a flow-potential pair $(\xxbar, \pp)$ such that:
\begin{enumerate}
	\item $\xxbar$ is well-ordered.
	\item $\xxbar$ and $\xx$ are equivalent. 
\end{enumerate}
\end{lemma}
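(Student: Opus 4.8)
The plan is to set up a coupling between the execution of \textsc{ConcaveFlow} on $G$ and \textsc{ScalingFlow} on $\overline{G}$ by induction on the sequence of operations (augmentations and dual adjustments), showing that after each step the invariant ``$\xxbar$ well-ordered and equivalent to $\xx$, with identical potentials $\pp$'' is preserved. The base case is immediate: both algorithms initialize $\xx_e = 0$ (resp. $\xxbar_{e_j}=0$ for all $j$), which is vacuously well-ordered and equivalent, and both set $\pp_u = \ww_{\max} l_u$ with the same levels. Since the dual adjustment and dual rescaling steps act only on potentials (via reachability in the eligible graph), the key is to check that (i) the eligible graph of $\overline{G}$ has ``the same'' connectivity structure as that of $G$ under the equivalence, so that the same vertices are reachable from $s$, and (ii) a blocking-flow augmentation in $\overline{G}$ can be chosen to stay well-ordered and to project down to a blocking-flow augmentation in $G$.

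For (i), the main observation is that concavity of $f_e$ makes $\ww_e(\xx_e)$ monotonically decreasing, so the multiedges $e_1,\dots,e_k$ have decreasing weights $\ww_{e_1} > \ww_{e_2} > \cdots$. Given a well-ordered $\xxbar$ equivalent to $\xx$, the ``active'' multiedge is the first $e_j$ that is not yet saturated; one checks that the forward residual capacity of $e_j$ in $\overline{G}$ under the linear eligibility rule of Figure~\ref{fig:invElgScaling} (namely $\ww^{\pp}_{e_j} \geq \delta_i$) together with the capacities of the saturated-but-weight-still-above-$\delta_i$ successors $e_{j+1},\dots$ sums exactly to $U_e(\xx_e,i)$ from Figure~\ref{fig:eligScalingNonlinear}, because $\ww^{\pp}_{e_j}$ tracks $\ww^{\pp}_e(\xxbar_{e_1}+\cdots+\xxbar_{e_j})$ and the threshold $\Delta$ with $\ww^{\pp}_e(\xx_e+\Delta)=\delta_i$ is precisely where the step function for $\ww_e$ crosses $\delta_i$. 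Symmetrically, the total backward eligible capacity of the occupied multiedges matches $L_e(\xx_e,i)$. Hence an arc $uv$ has positive eligible capacity in $\overline{G}$ iff it does in $G$, and the eligible capacities agree under the equivalence; this gives identical $s$-reachability and thus identical dual adjustments.

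For (ii), I would argue that a blocking flow in the eligible $\overline{G}$ can always be routed so that, on each edge $e$, forward flow fills the lower-indexed multiedges first and backward flow drains the higher-indexed ones first — this is legitimate because within a fixed edge all the forward multiedge-arcs go between the same pair $u,v$, so any blocking flow can be locally rearranged among parallel arcs without changing the net flow on $e$ or the saturation status of the edge as a whole, and such a rearrangement preserves both well-orderedness and the property that every $s$–$t$ path hits a fully saturated edge. Projecting by $\xx_e \mapsto \sum_j \xxbar_{e_j}$ then yields a blocking flow in $G$ satisfying the concave eligibility rules, and by the reverse-mapping uniqueness remark this projection is a bijection between valid augmentations on the two sides; pick the $\overline{G}$-augmentation that projects to whatever \textsc{ConcaveFlow} does. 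The main obstacle, and the step deserving the most care, is exactly this rearrangement claim: one must verify that restricting to \emph{well-ordered} augmentations does not shrink the set of achievable net flows on an edge below what a blocking flow in $G$ needs, i.e. that the greedy fill/drain order is without loss of generality for blocking flows — this follows from the fact that the eligible capacity on edge $e$ depends only on the aggregate $\xx_e$ (shown in part (i)), so the decomposition into multiedges is merely bookkeeping and any feasible aggregate change is realizable in well-ordered form.
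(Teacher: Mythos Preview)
Your proposal is correct and follows essentially the same approach as the paper: both argue that the eligible capacities on an edge $e$ in $G$ coincide with the total eligible capacity across its multiedges in $\overline{G}$, so that a blocking flow in one graph corresponds to a blocking flow in the other, and the multiedge flow can be distributed in the well-ordered way. The one point the paper makes explicit that you gloss over is the integrality condition ensuring the threshold $\ww_e(\xx_e+\Delta)=\delta_i-\pp_u+\pp_v$ lands exactly on a multiedge boundary---this holds because the potentials remain multiples of $\delta_i$ throughout (given $\epsilon$ a power of $2$) and the multiedge weights are multiples of $\delta_T$; without this, your claim that the eligible capacities ``sum exactly to $U_e(\xx_e,i)$'' would only be approximate.
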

\begin{proof}  
It suffices to prove that the above claim holds in an Eligibility Labeling and an Augmentation step. 

Consider an edge $uv$ that is eligible in forward direction in \textsc{ConcaveFlow} algorithm. Since $\xxbar$ is well-ordered and equivalent to $\xx$, there are some eligible multiedges between $u$ and $v$ in \textsc{ScalingFlow}. Moreover, the first condition in Figure \ref{fig:eligScalingNonlinear} guarantees that the amount of flow allowed on $uv$ is equal to sum of capacities of all eligible parallel edges from $u$ to $v$ in the multiedges setting. This follows because 
\begin{enumerate}
	\item The value of $\ww_e(\xx_{e} + \Delta)$ such that  $\ww^{\pp}_e(\xx_{e} + \Delta) = \delta_{i}$ is a multiple of $\delta_i$, since $\pp_u$ and $\pp_v$ are also multiples of $\delta_i$ assuming $\epsilon$ is a power of 2.
	\item A weight of each edge in $\overline{G}$ is a multiple of $\delta_{T}$ by construction.
\end{enumerate}

A similar argument can be applied to the backward eligibility case. Specifically, the amount of flow that can be pushed back on $uv$ in \textsc{ConcaveFlow} is equal to sum of capacities of all eligible multiedges from $v$ to $u$ in \textsc{ScalingFlow}.

It follows that a blocking flow in the eligible graph in \textsc{ConcaveFlow} is also a blocking flow in the eligible graph in \textsc{ScalingFlow}. Given a blocking flow $\zz$ in the single-edge setting, we can distribute $\zz_{uv}$ among eligible multiedges between $u$ and $v$ such that the resulting flow $\xxbar$ in \textsc{ScalingFlow} is still well-order, and still equivalent to the resulting flow $\xx$ in \textsc{ConcaveFlow}.
\end{proof}

This mapping then allows us to bound the result of the continuous process,
leading to our main result.

\begin{proof}(Of Theorem~\ref{thm:main})

Let $\xx^*$ be the optimal solution in the concave flow problem in $G$, and $\xx$ be the final solution returned by $\textsc{ConcaveFlow}$. Let $\xxbar^*$ be a solution in $\overline{G}$ such that $\xxbar^*$ is well-ordered and equivalent to $\xx^*$. By Lemma~\ref{lem:mapping}, \textsc{ScalingFlow} on input $\overline{G}$ can produce a solution $\xxbar$ in such that $\xxbar$ is well-ordered and equivalent to $\xx$. At the end of the algorithm, the invariant in Section \ref{sec:scalingLinear} must hold for the solution $\xxbar$. By Theorem \ref{thm:factorLinear},
\[ \sum_{e \in \overline{G}} \ww_e \xxbar_e \geq (1 - 8 \epsilon) \sum_{e \in \overline{G}} \ww_e \xxbar^*_e.\]
Since $\xx$ is equivalent to $\xxbar$ and the step function defined by $\left \{ \ww_{e_i}, \cc_{e_i} \right \}$ is always below the function $\ww_e$,
\[ f_e(\xx_e) = \int_{0}^{\xx_e} \ww_e(t) dt \geq \sum_{e \in \overline{G}} \ww_e \xxbar_e.\]
Notice that the step function defined by $\left \{ \ww_{e_i} + \delta_{T}, \cc_{e_i} \right \}$, however, is always above the function $\ww_e$. Since $\xx^*$ is equivalent to $\xxbar^*$,
\[ \sum_{e \in \overline{G}} \left( \ww_e + \delta_{T} \right)\xxbar^*_e \geq \int_{0}^{\xx^*_e} \ww_e(t) dt = f_e (\xx^*_e).\]
Since $\delta_{T} = \epsilon \ww_{\min} \leq \epsilon \ww_e$ for all $e \in \overline{G}$, we have $\ww_e + \delta_{T} \leq  \ww_e(1 + \epsilon)$. It follows that
\[ (1+ \epsilon)\sum_{e \in \overline{G}} \ww_e \xxbar^*_e \geq f_e(\xx_e^*).\]
 Putting everything together, 
\begin{align*} f_e(\xx_e) & \geq \sum_{e \in \overline{G}} \ww_e \xxbar_e \geq (1 - 8 \epsilon) \sum_{e \in \overline{G}} \ww_e \xxbar^*_e \\
 & \geq \frac{1 - 8 \epsilon}{1 + \epsilon}f_e (\xx^*_e) \geq (1 - 9\epsilon)f_e (\xx^*_e).
\end{align*}
Finally, by applying a similar argument as in Theorem~\ref{thm:factorLinear}, it is easy to see that the running time of \textsc{ConcaveFlow} is $O \left( {Dm } \epsilon^{-1} \log ( {\ww_{max}} / {\ww_{min}}) \log n \right)$.
\end{proof}

\section{Discussion}

We extended the Duan-Pettie framework for approximating weighted
matchings~\cite{DuanP14} to flows on small depth networks~\cite{Cohen95}.
Our approach can be viewed as viewing the length of augmenting paths as a regularizing
parameter on the weights of augmenting paths.
For general concave weight functions, the routine can also be viewed
as a dynamical system that adjusts flows based on the gradient of the
current weight functions~\cite{BonifaciMV12,BecchettiBDKM13,StraszakV16}.
Also, we believe our approach can be parallelized,
and extends to non-bipartite graphs: here the challenge is mostly
with finding an approximate extension of blocking flow routines.

Another related application with small depth networks is finding high density
subgraphs in networks~\cite{GionisT15,MitzenmacherPPTX15,BhattacharyaHNT15}.
These problems can be solved using parameteric maximum flows
or binary searching over maximum flows~\cite{Goldberg84}.
In both of these cases, the networks have constant depth,
but the reductions also change the notions of approximation.
Weights represents an additional tool in such routines,
and we plan to further investigate its applications.

	\bibliographystyle{alpha}
	\bibliography{../ref}
	
	\appendix
	\section{Applications}
\label{sec:applications}

We now outline some concrete applications of our formulation on
small depth networks: many of them are discussed in more details in the
survey by Ahuja et al.~\cite{AhujaMO95}.

\subsection{Assignment Problems}

Our techniques are directly motivated by those developed
for approximate max weight matchings~\cite{DuanP14}.
This problem, when phrased as the assignment problem,
has applications in matching objects across snapshots
as well as direct uses in assignment indicated by its name.
Names of some of these applications, as enumerated in
Application 16-18 of ~\cite{AhujaMO95}, include
locating objects in space, matching moving objects,
and rewiring typewriters.

The problem of finding matchings in large scale graphs has received
much attention in experimental algorithm design~\cite{LattanziMSV11,ManshadiAGKMS13,KumarMVV15}.
These instances often involve capacities on both sides,
as well as edges in between.
They are related to $b$-matchings, and have been studied
as generalized matchings.
Previous works that incorporated concave costs include:

\begin{itemize}

\item ~\cite{VVJ10} studied Adwords matching using
concave objective functions, and showed several stability
and robustness properties of these formulations.

\item \cite{TangTLTGL12} utilized concave networks
to perform assignments of experts to queries with
additional constraints on groups of experts as well as queries.
This leads to a depth $4$ network with $3$ layers corresponding
to node groups respectively.

\item
\cite{devanur} considered a generalization of the well-known
Adwords problem, which asks for maximizing the sum of budgeted linear functions of agents.
In the generalization, the utility of each agents is a concave function (of her choice) of her total allocation.
The objective is to maximize the sum of budgeted utilities, and 
an optimal online algorithm is given for this generalization.

\end{itemize}
Another instance of concave cost matching is matrix balancing
(Application 28 in~\cite{AhujaMO95}):
given a fractional matrix, 
we would like to find a matrix with a specified row /column sum
while minimizing distance under some norm.
Here a bipartite network can be constructed by turning the
rows into left vertices, columns into right vertices.
The concave functions can then go onto the edges as costs.

Our algorithm is able to handle such generalizations,
and lead to bounds similar to those from recent improvements
to positive packing linear programs~\cite{AzO15}.

Generalizations of matching and assignment lead naturally
to small depth networks. 
One such generalization is 3-Dimensional matching
which is NP-hard~\cite{Karp72}.
Here the generalization of edges is $3$-tuples of the
form $(x, y, z)$, with one element each from
sets $X$, $Y$, and $Z$, and 3-dimensional matching
asks for the maximum weighted set of tuples that
uses each element in $X$, $Y$, and $Z$ at most once.

A restriction of this problem on the other hand is
in $P$: the set of acceptable tuples are given
by pairs of edges between $XY$ and $YZ$.
Specifically, a tuple $(x, y, z)$ can be used iff
$xy \in E_{XY}$ and $yz \in E_{YZ}$, and its
cost is given by the sum of the two edges.
This version can be solved by maximum cost
flow on a depth $5$ network:
\begin{itemize}
\item For each element $x, y$ or $z$, create an in and out vertex,
connected by an edge with unit capacity.
\item Connect $X_{out}$ to $Y_{in}$ and $Y_{out}$
to $Z_{in}$ with edges corresponding to $E_{XY}$ and $E_{YZ}$.
\item Connect $s$ to $X_{in}$ and $Z_{out}$ to $t$.
\end{itemize}

\subsection{Mincost Flow}

Minimum cost flow asks to minimize the cost instead
of maximizing it.
This is perhaps the most well-known formulation of our problem.
However, it doesn't tolerate approximations very well:
if we stay with non-negative costs, we need to specify
the amount of flow.
If this is done exactly, it becomes flow feasiblity
on a small depth network.
There are known reductions of directed maximum flows
to this instance (e.g.~\cite{Madry13}, Section 3).

One way to limit the total amount of flow is to associate
a reward with each unit sent from $s$ to $t$. Specifically, let $\qq_e > 0$ be the cost of one unit of flow on $e$, 
and $Q > 0$ be the reward of one unit sent from $s$ to $t$, the objective function is
\[ \min \sum_e \qq_e \xx_e - Q f \]
where $f$ is the total amount of flow from $s$ to $t$.
In this case, negating the cost brings us back
to the maximum weight flow problem.
The approximation factor on the other hand becomes additive:
\begin{lemma} \label{lem:mincost}
Let $(\xx^*,f^*)$ be the optimal solution to the above mincost flow problem, The algorithm \textsc{ScalingFlow} in Section \ref{sec:scalingLinear} gives a solution $(\xx,f)$ such that:
\[\sum_e \qq_e \xx_e - Q f \leq (1+\epsilon) \left(\sum_e \qq_e \xx^*_e - Q f^* \right) + 2 \epsilon Qf^*\]
\end{lemma}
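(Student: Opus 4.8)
The plan is to reduce the min-cost objective to a max-weight flow instance by negating it, invoke the guarantee of \textsc{ScalingFlow} from Theorem~\ref{thm:factorLinear}, and then convert the resulting multiplicative guarantee (which is a guarantee about a quantity that is $\le 0$) into the stated additive-type bound. Throughout, write $\Pi(\xx,f) := Qf - \sum_e \qq_e\xx_e$ for the ``profit'' of routing $f$ units of $s$--$t$ flow. Since negation is order-reversing, $(\xx^*,f^*)$ is exactly the profit-maximizer, and $\Pi^* := \Pi(\xx^*,f^*) \ge 0$ because the empty flow has profit $0$; also $\Pi^* = Qf^* - \sum_e \qq_e\xx^*_e \le Qf^*$ since all $\qq_e,\xx^*_e \ge 0$.

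First I would set up the reduction. Maximizing $\Pi(\xx,f)$ over feasible $s$--$t$ flows is, up to a preprocessing step, a non-negative-weight max-weight flow problem: (i) delete every edge with $\qq_e \ge Q$, since any $s$--$t$ path through such an edge has cost $\ge Q$ and hence non-positive profit, so deleting it changes neither $\Pi^*$ nor $(\xx^*,f^*)$; (ii) carry the reward on the return arc and apply a longest-path potential shift on the DAG (in the spirit of Lemma~\ref{lem:reducedCost}) to move the remaining edge weights into a range $[\ww_{\min},\ww_{\max}]$ with $\ww_{\min}>0$, charging the residual discrepancy to the additive budget. The outcome is that \textsc{ScalingFlow} may be run on this instance and returns a flow $(\xx,f)$ whose profit is near-optimal.

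Second, applying Theorem~\ref{thm:factorLinear} (rescaling $\epsilon$ by the absolute constant $8$) yields $\Pi(\xx,f) \ge (1-\epsilon)\,\Pi^*$, i.e.
\[
Qf - \sum_e \qq_e\xx_e \;\ge\; (1-\epsilon)\Big(Qf^* - \sum_e \qq_e\xx^*_e\Big).
\]
Negating both sides gives $\sum_e \qq_e\xx_e - Qf \le -(1-\epsilon)\Pi^*$. Now use the identity $-(1-\epsilon)\Pi^* = (1+\epsilon)(-\Pi^*) + 2\epsilon\Pi^*$ together with $\Pi^* \le Qf^*$ to get $-(1-\epsilon)\Pi^* \le (1+\epsilon)(-\Pi^*) + 2\epsilon Qf^*$, that is,
\[
\sum_e \qq_e\xx_e - Qf \;\le\; (1+\epsilon)\Big(\sum_e \qq_e\xx^*_e - Qf^*\Big) + 2\epsilon Qf^*,
\]
which is the claim; the running time is inherited verbatim from Theorem~\ref{thm:factorLinear}.

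The main obstacle is step one: unlike the purely linear case, when some $s$--$t$ path has cost exceeding $Q$ there is no exact rewriting of $\max\Pi$ as a max-weight flow with strictly positive weights, since a valid potential certificate would force $Q$ to dominate every $s$--$t$ path cost. The additive slack $2\epsilon Qf^*$ is exactly what pays for this: one must argue that the weight perturbation used to make weights positive, together with truncating the step size at $\delta_T = \epsilon\ww_{\min}$, distorts the objective by at most $O(\epsilon Qf^*)$, using that every unit of $s$--$t$ flow collects reward exactly $Q$ while the per-unit accumulated weight along any $s$--$t$ path is at most $Q$. The arithmetic in the translation step is then routine.
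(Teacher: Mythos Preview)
Your second step---the arithmetic converting a $(1-\epsilon)$-multiplicative profit guarantee into the stated bound via $-(1-\epsilon)\Pi^*=(1+\epsilon)(-\Pi^*)+2\epsilon\Pi^*$ and $\Pi^*\le Qf^*$---is correct. The problem is that you never actually establish the premise $\Pi(\xx,f)\ge(1-\epsilon)\Pi^*$, and your own final paragraph concedes this. Concretely: after deleting edges with $\qq_e\ge Q$, a longest-path potential $\pp$ on the DAG does make every original arc's reduced weight nonnegative, but to invoke Lemma~\ref{lem:reducedCost} you need $\pp_s=\pp_{t'}$, and then the reduced weight on the reward arc $tt'$ becomes $Q$ minus the longest $s$--$t$ path cost, which can be negative (and even when nonnegative, the DAG arcs can have reduced weight exactly $0$, so $\ww_{\min}=0$ and Theorem~\ref{thm:factorLinear} does not apply). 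Your suggestion to ``perturb and charge to the $2\epsilon Qf^*$ slack'' is not an argument: you would have to bound the perturbation's effect on both $\Pi^*$ and on the algorithm's output, and you give no mechanism for either.

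The paper avoids the reduction entirely. It runs \textsc{ScalingFlow} directly on the signed-weight instance $\ww_e=-\qq_e$, $\ww_{tt'}=Q$, and proves a separate guarantee for negative weights (Lemma~\ref{lem:negative}): the invariants and Lemmas~\ref{lem:eligible}--\ref{lem:charge} go through unchanged, and the only modification is that in the case analysis of Theorem~\ref{thm:factorLinear} the inequality $\delta_T\le\epsilon\ww_e$ is replaced by $\delta_T\le\epsilon|\ww_e|$, yielding
\[
\sum_e \ww_e\xx_e \;\ge\; (1-\epsilon)\sum_{\ww_e>0}\ww_e\xx^*_e \;+\; (1+\epsilon)\sum_{\ww_e<0}\ww_e\xx^*_e.
\]
Specializing and negating gives the lemma in two lines. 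So the missing idea is not a cleverer reduction to positive weights but the observation that the \textsc{ScalingFlow} analysis already tolerates negative weights with an asymmetric $(1\mp\epsilon)$ factor on the positive/negative parts.
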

A proof of Lemma \ref{lem:mincost} can be found in Appendix~\ref{sec:negative}. Applications of minimum cost flows that lead to small
depth networks include:
\begin{itemize}
	\item Job scheduling, or optimal loading of a hopping airplane (application 13 from~\cite{AhujaMO95}): 
	When phrased as scheduling, this can be viewed as completing a maximum set of jobs each taking place between days $[s_i, t_i]$ and
giving gain $c_i$ so that at no point more than $u_i$ jobs are being done simultaneously on day $i$.

	This can be solved with a maximum cost flow whose depth is the number of days:
	\begin{itemize}
		\item The days become a path, with capacity $u_i$,
		\item For each job, create a node (connected to source with capacity $1$) with two edges, one with cost $0$ to $t_i$, and 
		one with cost $c_i$ to $s_i$.
		\item Each node on the path has an edge to sink, $t$, with capacity equalling the number of jobs ending that day.
	\end{itemize}
	This leads to a network whose depth is the number of days,
	or number of stops.
	Both of these quantities are small in real instances.
	
	\item Balancing of schools (Application 15 from~\cite{AhujaMO95}): the goal is to assign students from varying
	backgrounds to schools, belong to several districts, so that
	each school as well as each district has students with balanced backgrounds.
	Furthermore, the cost corresponds to the distance the student needs to travel to the schools.
	This leads to a depth $3$ network where the layers are the students, schools, and districts respectively.
\end{itemize}


Finally, we believe several applications can benefit from the incorporation of non-linear weight functions.
As an instance, we can solve the following problem using our framework: Let $G$ be a network 
with $k$ sources, $\{s_i, 1 \leq i \leq k\}$, and a sink, $t$, and let $f_i$ be a concave function for each $1 \leq i \leq k$. We wish to maximize $\sum_i {f_i (F_i)}$, 
where $F_i$ is the amount of flow sent from $s_i$ to $t$.

\suppress{
\subsection{Optimal Closure and High Density Subgraphs}

Another motivating factor in this study is the recent
use of high density subgraphs in studying large
networks~\cite{GionisT15,MitzenmacherPPTX15,BhattacharyaHNT15}.

Here a common metric is max density subgraph, which
can be solved using a single parametric maximum flow,
or binary searching over maximum flows~\cite{Goldberg84}.
For simplicity and its relation to optimal closure,
we discuss the latter method.

One method is to create a $2$-layer graph where all edges
are on the first layer, and all vertices on the second.
The graph has three types of edges:
\begin{enumerate}
	\item $\alpha$ from source to $e$
	\item infinite from $e$ to its two vertices $u$ and $v$.
	\item $1$ from each $u$ to sink.
\end{enumerate}
A subgraph with density $\geq \alpha$ exists
iff the amount of flow is more than the capacity of things \vijay{things?}
leaving $S$.

This network construction is in fact a version of
optimum closure: here each node of the graph has
a potentially negative cost, and if a node is taken,
everything that goes into it must also be chosen.
This is also solved by connecting all negative nodes
to source, positives to sink, and running maxflow.

We believe the addition of costs gives a variety of
new ways of studying such networks.
However, closure problems are made difficult by the
fact that the answer depends on the gap between
the amount of flow into some edges and their capacities.
What we are able to prove is:
}

	\section{Negative Weight Functions}
\label{sec:negative}
We consider the case where our network can have negative weights. For simplicity, we only study linear weight functions. The analysis, however, can be extended to the case of concave function as in Section~\ref{sec:nonlinear}. Here, we assume that the absolute values of the weights are bounded:
\[ \ww_{\max} \geq |\ww_e| \geq \ww_{\min} \quad \forall e.\]
\begin{lemma} \label{lem:negative}
 Let $\xx$ be a solution returned by $\textsc{ScalingFlow}$, and let $\xx^*$ be the optimal solution
\[ \sum_e \ww_e \xx_e \geq (1- 8\epsilon) \sum_{\ww_e > 0} \ww_e \xx^*_e + (1 + 8\epsilon) \sum_{\ww_e < 0} \ww_e \xx^*_e.  \]
\end{lemma}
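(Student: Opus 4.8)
The plan is to re-run the proof of Theorem~\ref{thm:factorLinear} almost verbatim, refining its two-case analysis by also splitting on the sign of $\ww_e$. The first step is to check that \emph{nothing in the correctness of \textsc{ScalingFlow} used positivity of the weights}. The initialization $\pp_u = \ww_{\max} l_u$ still gives, for $\xx_e = 0 < \cc_e$, the bound $\ww_e^{\pp} = \ww_e - l_e \ww_{\max} \le \ww_{\max} - \ww_{\max} \le 0 \le 2\delta_i$ (using $l_e \ge 1$ and $\ww_e \le |\ww_e| \le \ww_{\max}$), so invariant~\ref{rrc2a} holds at the start; and the maintenance arguments (Lemmas~\ref{lem:augmentOkScaling} and~\ref{lem:dualAdjOkScaling}, and the dual rescaling lemma) only reference eligibility thresholds and $l_e \ge 1$, never signs. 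Hence both invariants~\ref{rrc1a} and~\ref{rrc2a} hold at termination, and $\pp_s = 0 = \pp_t$ there, so Lemma~\ref{lem:reducedCost} applies. Likewise Lemma~\ref{lem:eligible} still yields $\sum_{e\in F_P}\ww_e - \sum_{e\in B_P}\ww_e \ge D\delta_P/(2\epsilon) \ge 0$ for every path in the decomposition; consequently $\sum_e \ww_e \xx_e \ge 0$, and the proof of Lemma~\ref{lem:charge} goes through unchanged, giving $\sum_e 3 l_e \delta_e^f \xx_e \le 6\epsilon \sum_e \ww_e \xx_e$.

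With $\pp$ the final potentials, I would then argue edge by edge. If $\xx_e > \xx_e^*$ then $\xx_e > 0$, and invariant~\ref{rrc1a} together with $l_e \ge 1$ gives $\ww_e^{\pp} \ge -3 l_e \delta_e^f$; multiplying by $\xx_e$, using $\xx_e^* \le \xx_e$ and discarding the nonnegative term $3 l_e \delta_e^f \xx_e^*$ yields $\xx_e \ww_e^{\pp} \ge \xx_e^* \ww_e^{\pp} - 3 l_e \delta_e^f \xx_e$, regardless of $\sgn{\ww_e}$. If $\xx_e < \xx_e^*$ then $\xx_e < \cc_e$, and invariant~\ref{rrc2a} gives $\ww_e^{\pp} \le 2\delta_T$, so $\xx_e \ww_e^{\pp} \ge (\ww_e^{\pp} - 2\delta_T)\xx_e \ge (\ww_e^{\pp} - 2\delta_T)\xx_e^* = \xx_e^* \ww_e^{\pp} - 2\delta_T \xx_e^*$. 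This is where the sign enters: since $\delta_T = \epsilon \ww_{\min}$ and $\ww_{\min} \le |\ww_e|$, when $\ww_e > 0$ we get $-2\delta_T \xx_e^* \ge -2\epsilon \ww_e \xx_e^*$, while when $\ww_e < 0$ we get $-2\delta_T \xx_e^* \ge 2\epsilon \ww_e \xx_e^*$. (The case $\xx_e = \xx_e^*$ contributes $\xx_e\ww_e^{\pp} = \xx_e^*\ww_e^{\pp}$ and is harmless.) Summing over all edges, enlarging the partial sums over $\{\ww_e>0,\, \xx_e<\xx_e^*\}$ and $\{\ww_e<0,\, \xx_e<\xx_e^*\}$ to the full sums over $\{\ww_e>0\}$ and $\{\ww_e<0\}$ (legal since in the first the dropped terms are nonnegative and appear with a minus sign, and in the second they are nonpositive and appear with a plus sign), invoking Lemma~\ref{lem:charge} to bound $\sum_e 3 l_e \delta_e^f \xx_e \le 6\epsilon\sum_e \ww_e\xx_e$, and finally applying Lemma~\ref{lem:reducedCost} on both sides, should give
\[
(1 + 6\epsilon)\sum_e \ww_e \xx_e \;\ge\; (1 - 2\epsilon)\sum_{\ww_e > 0}\ww_e \xx_e^* \;+\; (1 + 2\epsilon)\sum_{\ww_e < 0}\ww_e \xx_e^*.
\]

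To conclude I would divide by $1+6\epsilon$ and use the elementary inequalities $\tfrac{1-2\epsilon}{1+6\epsilon} \ge 1 - 8\epsilon$ and $\tfrac{1+2\epsilon}{1+6\epsilon} \le 1 + 8\epsilon$ (both valid for $\epsilon < 1/10$, indeed for all $\epsilon \ge 0$): since $\sum_{\ww_e > 0}\ww_e\xx_e^* \ge 0$ and $\sum_{\ww_e < 0}\ww_e\xx_e^* \le 0$, replacing the two coefficients yields exactly the claimed bound $\sum_e \ww_e\xx_e \ge (1-8\epsilon)\sum_{\ww_e>0}\ww_e\xx_e^* + (1+8\epsilon)\sum_{\ww_e<0}\ww_e\xx_e^*$. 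I expect the only delicate point to be the sign bookkeeping when passing from the restricted sums (over edges with $\xx_e < \xx_e^*$) to the full sums over positive- and negative-weight edges: one must verify in each of the two sign regimes separately that enlarging the index set moves the inequality in the favorable direction. Everything else is a direct transcription of the linear-weight argument in Section~\ref{sec:scalingLinear}.
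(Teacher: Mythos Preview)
Your proposal is correct and follows essentially the same route as the paper's proof: both observe that the invariants and Lemmas~\ref{lem:eligible}~and~\ref{lem:charge} go through unchanged with negative weights, split the case $\xx_e < \xx_e^*$ according to $\sgn{\ww_e}$ using $\delta_T \le \epsilon|\ww_e|$, and combine with Lemma~\ref{lem:charge} to reach the claimed bound. Your write-up is in fact more explicit than the paper's at two points (the justification that the error terms can be extended from $\{\xx_e < \xx_e^*\}$ to all edges of each sign, and the final passage from coefficients $(1-2\epsilon)/(1+6\epsilon)$, $(1+2\epsilon)/(1+6\epsilon)$ to $1-8\epsilon$, $1+8\epsilon$), which the paper absorbs into ``the result then follows from Lemma~\ref{lem:charge}.''
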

\begin{proof} 
A key observation is that the Lemmas \ref{lem:eligible} and \ref{lem:charge} are robust under negative weights. Using an argument similar to as in Theorem \ref{thm:factorLinear}, we have:
\begin{enumerate}
		\item If $\xx_e < \xx^*_e$, then $\xx_e < \cc_e$, and
		the invariant gives $\ww_e^{\pp} \leq 2 \delta_{T}$, which implies
		\[
		\ww_e^{\pp}\xx_e  \geq \left( \ww_e^{\pp} - 2 \delta_{T} \right) \xx_e  \geq  \left( \ww_e^{\pp} - 2 \delta_{T} \right) \xx^*_e.
		\]
		Since $\delta(T) \leq \epsilon \ww_{\min} \leq \epsilon |\ww_e|$ for all $e$, we consider 2 cases: 
		\[
		\ww_e^{\pp}\xx_e  \geq \left( \ww_e^{\pp} - 2 \delta_{T} \right) \xx^*_e \geq 
		\begin{cases}
		\ww_e^{\pp}\xx^*_e - 2\epsilon \ww_e \xx^*_e \quad \text{if $\ww_e > 0$,}\\
		\ww_e^{\pp}\xx^*_e + 2\epsilon \ww_e \xx^*_e  \quad \text{if $\ww_e < 0$.}
		\end{cases}
		\]
		\item If $\xx_e > \xx^*_e$, then $\xx_e > 0$, and the invariant gives $\ww_e^{\pp} \geq -3 l_e \delta_e^f$, which combined with $\xx^*_e \leq \xx_e$ gives:
		\[
		\xx_e \left( \ww_e^{\pp} + 3 l_e \delta_e^f \right) \geq \xx^*_e \left( \ww_e^{\pp} + 3 l_e \delta_e^f \right) \geq \xx^*_e \ww_e^{\pp}
		\]
\end{enumerate}
Combining these then gives
\begin{align*}
	\sum_{e} \xx_e \left( \ww_e + 3 l_e\delta_e^f  \right) & = \sum_{e} \xx_e \left( \ww_e^{\pp} + 3 l_e\delta_e^f  	\right)  \\
	& \geq \sum_{e}  \ww_e^{\pp}\xx^*_e - \sum_{e:\ww_e < 0} 2\epsilon\ww_e \xx^*_e + \sum_{e:\ww_e < 0}  2\epsilon \ww_e \xx^*_e	\\
	& \geq \sum_{e:\ww_e > 0} \left( \ww_e\xx^*_e - 2\epsilon \ww_e \xx^*_e \right)+ \sum_{e:\ww_e < 0} \left( \ww_e\xx^*_e + 2\epsilon \ww_e \xx^*_e	 \right)\\
	& = (1 - 2 \epsilon) \sum_{e:\ww_e > 0} \ww_e  \xx^*_e + (1 + 2 \epsilon) \sum_{e:\ww_e < 0} \ww_e  \xx^*_e
\end{align*}
and the result then follows from Lemma~\ref{lem:charge}.

\end{proof}

\begin{proof}(Of Lemma~\ref{lem:mincost}) Let $\ww_e = - \qq_e$ for every $e$, and add a directed edge from $t$ to a dummy vertex $t'$ with $\ww_{tt'} = Q$. By Lemma~\ref{lem:negative}, a solution returned by $\textsc{ScalingFlow}$ on that instance has the following guarantee:
\[Qf - \sum_e \qq_e \xx_e \geq (1 - \epsilon) Qf^* - (1 + \epsilon) \sum \qq_e \xx^*_e. \]
Negating both sides gives
\begin{align*}
\sum_e \qq_e \xx_e - Qf & \leq (1 + \epsilon) \sum \qq_e \xx^*_e - (1 - \epsilon) Qf^* \\
& = (1 + \epsilon) \left( \sum \qq_e \xx^*_e - Qf^* \right) + 2 \epsilon Qf^*.
\end{align*}

\end{proof}
	
\end{document}